\newtheorem{theorem}{Theorem}[section]
\newtheorem{lemma}[theorem]{Lemma}
\newtheorem{corollary}[theorem]{Corollary}
\newenvironment{numberedtheorem}[1]{%
\begin{theorem}}{\end{theorem}\addtocounter{theorem}{-1}}
\newcommand{\rev}{\textsc{Rev}}
\newcommand{\srev}{\textsc{SRev}}
\newcommand{\E}{\mathbb{E}}
\newcommand{\adaptrev}{\textsc{BuyManyRev}}
\newcommand{\buymanyrev}{\textsc{BuyManyRev}}
\newcommand{\cs}{c}
\newcommand{\zeros}{\mathbf{0}}
\newcommand{\udelta}{\Delta_m}
\newcommand{\exantesrev}{\textsc{EA-SRev}}
\newcommand{\exantebuymanyrev}{\textsc{EA-BuyManyRev}}
\newcommand{\ps}{\mathbf{p}}
\newcommand{\dists}{\mathfrak{D}}
\newcommand{\profit}{\textsc{Profit}}
\newcommand{\sprofit}{\textsc{SProfit}}
\newcommand{\buymanyprofit}{\textsc{BuyManyProfit}}
\newcommand{\alloc}{\mathbf{x}}
\newcommand{\R}{\mathbb{R}}
\newcommand{\dist}{\mathcal{D}}
\newcommand{\argmax}{\operatorname{arg\,max}}
\newcommand{\argmin}{\operatorname{arg\,min}}
\newcommand{\remove}[1]{}
\newcommand{\M}{\textsc{M}}
\begin{document}

\title{Buy-Many Mechanisms for Many Unit-Demand Buyers}
\author{
Shuchi Chawla \\ UT-Austin \\ {\tt shuchi@cs.utexas.edu} \and 
Rojin Rezvan \\ UT-Austin\\ {\tt rojinrezvan@utexas.edu} \and
Yifeng Teng \\ Google Research \\ {\tt yifengt@google.com} \and 
Christos Tzamos \\ UW-Madison \\ {\tt tzamos@wisc.edu}
}
\date{}

\maketitle
\thispagestyle{empty}

\begin{abstract}
A recent line of research has established a novel desideratum for designing approximately-revenue-optimal multi-item mechanisms, namely the buy-many constraint. Under this constraint, prices for different allocations made by the mechanism must be subadditive, implying that the price of a bundle cannot exceed the sum of prices of individual items it contains. This natural constraint has enabled several positive results in multi-item mechanism design bypassing well-established impossibility results. Our work addresses the main open question from this literature of extending the buy-many constraint to multiple buyer settings and developing an approximation.

We propose a new revenue benchmark for multi-buyer mechanisms via an ex-ante relaxation that captures several different ways of extending the buy-many constraint to the multi-buyer setting. Our main result is that a simple sequential item pricing mechanism with buyer-specific prices can achieve an $O(\log m)$ approximation to this revenue benchmark when all buyers have unit-demand or additive preferences over m items. This is the best possible as it directly matches the previous results for the single-buyer setting where no simple mechanism can obtain a better approximation. 

From a technical viewpoint we make two novel contributions. First, we develop a supply-constrained version of buy-many approximation for a single buyer. Second, we develop a multi-dimensional online contention resolution scheme for unit-demand buyers that may be of independent interest in mechanism design. 



\end{abstract}

\begin{titlepage}

\maketitle

\end{titlepage}

\section{Introduction}

Revenue maximization in multi-parameter settings is notoriously challenging. It is known, for example, that in the absence of strong assumptions on the buyer's value distribution, the optimal revenue cannot be approximated within any finite factor by any mechanism with finite description complexity even for the simplest possible setting of two items and a single unit-demand buyer \cite{hart2013menu, briest2015pricing}. These impossibility results motivate the search for a different benchmark that captures salient features of the problem space while also permitting non-trivial approximation. 

For single-agent settings one such benchmark was proposed by \citet*{briest2015pricing} and \citet*{chawla2019buy}. These works showed that the infinite gap between the revenue of  the optimal mechanism and any simple mechanism arises precisely when the mechanism offers ``super-additive'' options that charge for some bundles of items more than the sum of prices of their individual components. 
The so-called ``buy-many'' constraint disallows such exploitative behavior: when interpreted as a pricing over all possible allocations, the mechanism should be subadditive (defined appropriately over randomized allocations, as described in Section~\ref{sec:prelim}). An alternate motivation for this constraint arises as a consequence of buyer behavior in scenarios where the buyer can interact with the mechanism multiple times, purchasing an option from the menu each time. The buyer can then construct an allocation by purchasing a multiset of options from the menu in the cheapest possible manner. The buy-many constraint restricts the kinds of mechanisms the seller can use to extract revenue from the buyer. The corresponding benchmark is the optimal revenue that can be obtained from any mechanism satisfying the constraint. \cite{briest2015pricing} and \cite{chawla2019buy} showed that imposing such a constraint enables positive results even without requiring any assumptions on the buyer's valuation function or the value distribution: for settings with a single buyer and $m$ items, the optimal revenue obtained from any mechanism satisfying the buy-many constraint is no more than $O(\log m)$ times the revenue obtained from an item pricing.

Our work addresses the primary direction left open by these works and their followups, namely extending the buy-many constraint and revenue benchmark to settings with multiple buyers. Obtaining such an extension is challenging, however, as it depends on the particular implementation of the mechanism, differences in these details can lead to very different benchmarks. Indeed while the two approaches described above for formalizing the buy-many constraint -- as a restriction on the pricing function and as a consequence of buyer behavior -- lead to equivalent definitions in the single-buyer setting, they turn out to be very different in the multi-buyer setting. In fact, the latter approach of allowing buyers to interact with the mechanism multiple times provides no meaningful restriction on mechanisms at all and once again allows for unbounded revenue gaps between simple and optimal mechanisms.\footnote{Consider, for example, multiple interactions of a buyer with the mechanism interleaved by purchases made by other buyers. As the item supply changes, the mechanism can update the prices on its menu, and no longer necessarily needs to satisfy a subadditivity constraint on the final pricing observed by the buyer. In fact, by exploiting this supply-based pricing approach, a multi-buyer buy-many mechanism can simulate any single-agent non-buy-many mechanism, inheriting the unbounded simple-versus-optimal revenue gaps of the latter setting. Sybil-proofness or false-name-proofness is even easier to achieve in principle, unless some symmetry-type restrictions are placed on the mechanism (as in \cite{Makoto04, IwasakiCOSTGY10}, for example), as the mechanism can simply refuse to make any allocations unless the number of agents is exactly $n$.} We instead define the buy-many constraint as a restriction on the effective pricing faced by any individual buyer participating in the mechanism. We view the buy-many constraint as a standalone desideratum for the design of mechanisms that disallows exploitative behavior on part of the seller. This approach leads to a non-trivial restriction, while continuing to permit a range of different designs. We show that it becomes possible to recover upper bounds on the revenue gap between simple and optimal mechanisms without assumptions about buyers' value distributions, just as in the single-buyer setting.

Even as a restriction on the mechanism's pricing, the buy-many constraint can take many different forms depending on the information available to buyers at different stages in the mechanism. To obtain a comprehensive theory of multi-buyer buy-many mechanisms without going into the intricacies of specific applications, we avoid choosing any particular such extension, and instead define an ex-ante relaxation of the optimal buy-many revenue that simultaneously captures a broad range of settings. In Section~\ref{sec:prelim}, we describe the different forms of the buy-many constraint encompassed by this relaxation.

\subsection*{An ex-ante relaxation of buy-many revenue}

We view multi-buyer settings as a collection of single-buyer instances via an ex-ante relaxation that allows the mechanism to allocate each item multiple times as long as the {\em expected} number of buyers each item is allocated to is at most 1. To be specific, let $x_i$ be an $m$-dimensional allocation vector, with $x_{ij}\in [0,1]$ denoting the probability of allocation of item $j$ to a particular buyer $i$. We consider single-buyer mechanisms satisfying two restrictions. First, over the randomness in the buyer's valuation function, we require that the probability of allocation of each item $j$ to the buyer is at most $x_{ij}$. We say that the mechanism satisfies the ex-ante constraint $x_i$. Second, we require that the mechanism satisfies the single-buyer buy-many constraint. We then consider the maximum revenue that can be obtained from any mechanism for buyer $i$ that satisfies both of the aforementioned constraints. Note that the buy-many constraint is not closed over convex combinations, so distributions over buy-many menus can potentially obtain higher revenue than buy-many menus themselves. We accordingly consider the maximum revenue obtainable from individual buy-many mechanisms or distributions over buy-many mechanisms that satisfy the ex-ante constraint $x_i$. Let $\textsc{BuyManyRev}_i(x_i)$ denote this upper bound. 
The following program then gives an upper bound on the revenue of multi-buyer buy-many mechanisms:
$$ \textsc{ExAnte-BuyManyRev} \, :=\, \max_{x_1,...,x_n \ge \mathbf{0}} \sum_i \textsc{BuyManyRev}_i(x_i) \text{ such that } \sum_i x_i \preceq \mathbf{1}. \footnote{For any two vectors $x,y\in \R^m$, $y\preceq x$ means $y_j\leq x_j$ for every $j\in[m]$.}$$ 

Our goal in this work is to design simple multi-buyer mechanisms that are clearly buy-many but at the same time are competitive with this new benchmark.



\subsection*{Approximating the buy-many revenue by item pricings}

Although buy-many mechanisms can, in general, be quite complicated, we show that when all buyers are unit-demand or additive, the optimal buy-many revenue (as well as the ex-ante upper-bound) can be approximated via a simple class of mechanisms, namely sequential item pricings. We obtain an $O(\log m)$ approximation where $m$ is the number of items, matching within constant factors the approximation achieved by \citet{chawla2019buy} and \citet{briest2015pricing} for single buyer settings. 

\begin{theorem}\label{thm:main}
For $n$ independent buyers that are unit-demand or additive over $m$ items with value distribution $\dists$,
\begin{equation*}
    \textsc{ExAnte-BuyManyRev}(\dists)\leq O(\log m)\srev(\dists).
\end{equation*}
\end{theorem}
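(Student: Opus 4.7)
The plan is to prove Theorem~\ref{thm:main} via a two-stage reduction: first reduce each per-buyer ex-ante buy-many revenue to a per-buyer ex-ante item-pricing revenue using the single-buyer result of \citet{chawla2019buy}, and then sequentialize those per-buyer item pricings into a single multi-buyer mechanism with only a constant-factor loss. For the first stage, fix an optimal ex-ante allocation $\{x_i^*\}$ with $\sum_i x_i^* \preceq \mathbf{1}$ witnessing $\textsc{ExAnte-BuyManyRev}(\dists) = \sum_i \buymanyrev_i(x_i^*)$. For each buyer $i$, the aim is to produce an item pricing $\itempricing^i$ whose expected revenue from buyer $i$'s value distribution is $\Omega(\buymanyrev_i(x_i^*)/\log m)$ and whose allocation probability on each item $j$ is at most $x_{ij}^*$. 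The natural strategy is to apply the $O(\log m)$ single-buyer buy-many-to-item-pricing bound of \citet{chawla2019buy} to the single-buyer buy-many mechanism attaining $\buymanyrev_i(x_i^*)$; the technical twist is to honor the ex-ante allocation constraints. One plausible route is to randomly restrict the resulting pricing by offering item $j$ only with probability $x_{ij}^*$ (equivalently, pricing it at $\infty$ the rest of the time), which trivially respects the allocation cap, and then to verify that this restriction costs only a constant factor on the revenue side.

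For the second stage, run the sequential mechanism in which buyers arrive in a fixed order and buyer $i$ faces her personalized prices $\itempricing^i$ on whatever items remain available. Because $\sum_i x_{ij}^* \leq 1$, after scaling allocations down by a constant, Markov's inequality guarantees that each item is still available to buyer $i$ with constant probability when her turn comes. Combined with a unit-demand prophet-inequality- or magician-style argument---each buyer's optimal purchase under item pricing is a single item, so losing some items costs at most the revenue she would have extracted from them---this allows each buyer $i$ to capture a constant fraction of her ex-ante item-pricing revenue in expectation. Summing over buyers then gives $\srev(\dists) = \Omega\!\left(\sum_i \buymanyrev_i(x_i^*)/\log m\right) = \Omega(\textsc{ExAnte-BuyManyRev}(\dists)/\log m)$, as claimed.

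The main obstacle is the first stage: extending the single-buyer bound of \citet{chawla2019buy} to the ex-ante setting in which the item pricing must respect per-item allocation caps $x_i^*$, without introducing an additional $\log m$ factor. The unconstrained theorem does not automatically yield a constrained analog, so either a direct inspection of the original proof is required to track how it behaves under probabilistic availability, or a reduction must be designed that maps the constrained single-buyer problem to an augmented unconstrained one. Any loss at this step need only be a constant, since Theorem~\ref{thm:main} tolerates constant factors, but avoiding an additional $\log m$ slippage is where the real care lies. Stage~2 is then a fairly standard ex-ante-to-sequential argument specialized to unit-demand buyers.
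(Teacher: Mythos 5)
Your two-stage decomposition (reduce ex-ante buy-many to ex-ante item pricing, then sequentialize) matches the paper's high-level plan, so Stage~2 is in the right spirit. But Stage~1 as proposed does not work, and the paper resolves it with a genuinely different idea that you should know about.

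\paragraph{Stage~1: the random-restriction argument fails.} You propose to take the \citet{chawla2019buy} item pricing $\itempricing^i$ for the single-buyer problem and then hide each item $j$ independently except with probability $x_{ij}^*$, claiming the cap is met ``trivially'' and hoping the revenue loss is constant. The allocation cap is indeed met, but the revenue loss is \emph{not} bounded by a constant. The pricing $\itempricing^i$ is only guaranteed to approximate $\buymanyrev_i(x_i^*)$; nothing controls how its own allocation vector relates to $x_i^*$. If $\itempricing^i$ concentrates most of its revenue on an item $j$ that it sells with probability $\gg x_{ij}^*$ (which is entirely possible, since the buy-many mechanism might be getting most of its revenue from lotteries rather than from deterministic sales of $j$), then suppressing $j$ down to probability $x_{ij}^*$ kills essentially all that revenue. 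The loss is of order $\max_j \alloc_{\itempricing^i,j}/x_{ij}^*$, which can be polynomial in $m$ or worse. The paper explicitly flags exactly this obstruction: the \citeauthor{chawla2019buy} construction ``does not provide much control over the allocation probability of the random item pricing it produces,'' which is why they do \emph{not} invoke that theorem as a black box.

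\paragraph{What the paper does instead.} The paper works with a Lagrangian relaxation of the supply constraint. Observing that $\srev(\dist_i,x_i)$ is concave in $x_i$, they pick the cost vector $c := \nabla\srev(\dist_i,x_i^*)$, so that the item pricing achieving $\srev(\dist_i,x_i^*)$ is automatically the optimal item pricing for the \emph{unconstrained} profit objective with per-item production costs $c$. This sidesteps the allocation-control problem entirely: the constrained-revenue problem becomes an unconstrained-profit problem. The remaining work is to compare item-pricing profit against buy-many profit with costs. Here a second subtlety arises: the paper shows (Theorem~\ref{thm:cost-gap-example}) that $\buymanyprofit_c$ can exceed $\sprofit_c$ by an $\Omega(m)$ factor, so a direct logarithmic bound is impossible. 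They instead prove a bi-criteria bound (Theorem~\ref{buy-many single buyer with costs}), $\buymanyprofit_{2c}(\dist) \le 2\ln(4m)\,\sprofit_c(\dist)$, by re-running the \citeauthor{chawla2019buy} envelope-theorem argument with costs folded in, and show that this bi-criteria version suffices to recover Theorem~\ref{thm:constrained-alloc-approx}. Neither the $\Omega(m)$ obstruction nor the bi-criteria fix appears in your proposal, and both are essential.

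\paragraph{Stage~2: right idea, missing the key lemma.} Your ``Markov plus prophet-inequality'' sketch is close, but ``losing some items costs at most the revenue she would have extracted from them'' glosses over the induction invariant: when a buyer's favorite item is gone she may substitute to a \emph{different} item, inflating that item's allocation probability beyond $x_{i'j}^*/2$ for later buyers and breaking the availability guarantee. The paper's Lemma~\ref{lemma:item-pricing-with-less-alloc-exists} (any target allocation $y\preceq x^*$ is achievable by a random restriction of the pricing with revenue exactly $y\cdot p$) is precisely what lets one hit allocation \emph{exactly} $x_i^*/2$ per buyer, preserving both the revenue and the availability invariant. You should state and prove something of this form to make Stage~2 rigorous.
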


Here $\srev$ denotes the optimal revenue achievable through sequential item pricings.
A sequential item pricing mechanism interacts with buyers sequentially in a particular order. It offers to each buyer the set of remaining items at predetermined item prices and allows the buyer to purchase any item of her choice. Notably, our approximation result holds for a worst-case order of arrival of the buyers. Furthermore, the item prices offered to each buyer are non-adaptive in that prices are computed once at the beginning of the mechanism before buyer values are instantiated, and do not change based on instantiated values and purchasing decisions of buyers that arrive earlier in the ordering. Interestingly, sequential item pricing was previously shown in \cite{chawla2010multi} to obtain a constant-factor approximation to the overall (non-buy-many) optimal revenue for unit-demand buyers when buyers' values are independent across items.


\subsection*{Our techniques}

Our main approximation result consists of two parts, each of which is of independent interest. First, we define an ex-ante supply-constrained relaxation of (distributions over) item pricings, similar to the ex-ante relaxation of general buy-many mechanisms discussed above. We show that the ex-ante item pricing revenue provides a logarithmic approximation to the ex-ante buy-many revenue for buyers with any combinatorial valuation function.

Focusing on ex-ante relaxations allows us to revert back to the single-agent setting. We extend the logarithmic upper bound on the gap between buy-many and item pricing revenues proved by \cite{chawla2019buy} to the single buyer setting with a supply constraint. However, extending \cite{chawla2019buy}'s argument is not straightforward because it does not provide any control on how the expected allocation of the item pricing relates to that of the optimal buy-many mechanism. We instead consider a Lagrangian version of the limited supply setting: in this setting, we are given a production cost for each item and our goal is to maximize the mechanism's profit, namely its revenue minus the expected cost it has to pay to produce the items it sells. Unfortunately, it turns out that for some cost vectors, the expected profit of a buy-many mechanism can be an $\Omega(m)$ factor larger than the expected profit of any item pricing. We show instead that the optimal item pricing profit approximates the profit of an optimal buy-many mechanism that faces slightly higher (in multiplicative terms) production costs. This allows us to obtain a logarithmic bound on the gap between the two ex-ante relaxations.

\begin{theorem}\label{cor:eabuymany-vs-srev}
For any joint value distribution $\dists$ over $n$ buyers and $m$ items, 
\begin{equation*}
    \textsc{ExAnte-BuyManyRev}(\dists)\leq O(\log m) \textsc{ExAnte-SRev}(\dists).
\end{equation*}
\end{theorem}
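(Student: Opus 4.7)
The plan is to decompose both sides along buyers, held together only by the ex-ante supply constraint. Let $(x_1^*,\ldots,x_n^*)$ be an optimal solution to the $\textsc{ExAnte-BuyManyRev}(\dists)$ program, so $\sum_i x_i^*\preceq \mathbf{1}$, and define $\textsc{SRev}_i(x)$ in parallel with $\textsc{BuyManyRev}_i(x)$: the supremum revenue of a distribution over single-buyer item pricings for buyer~$i$ whose expected allocation probability of item~$j$ is at most $x_j$. Since $(x_i^*)$ is automatically ex-ante feasible for the item-pricing program, it suffices to establish the per-buyer inequality
\begin{equation*}
\textsc{BuyManyRev}_i(x_i^*) \leq O(\log m)\cdot \textsc{SRev}_i(x_i^*)\qquad\text{for every }i\in[n];
\end{equation*}
summing over $i$ then gives the theorem.

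\textbf{Step 2: Lagrangian relaxation.} To prove the per-buyer bound I would replace the hard supply constraint by a linear penalty. For any cost vector $c\in\Rnn^m$, set
\begin{align*}
\textsc{BuyManyProfit}_i(c) &:= \sup_{\mec\in\textsc{BuyMany}} \E\bigl[\textsc{Rev}(\mec) - \langle c,\textsc{Alloc}(\mec)\rangle\bigr],\\
\textsc{SProfit}_i(c) &:= \sup_{\text{item pricings}} \E\bigl[\textsc{Rev} - \langle c,\textsc{Alloc}\rangle\bigr].
\end{align*}
The set of distributions over mechanisms is convex and both $\textsc{Rev}$ and expected $\textsc{Alloc}$ are linear in this distribution, so a Sion-type minimax argument yields strong duality
\begin{equation*}
\textsc{SRev}_i(x) = \min_{c\succeq 0}\bigl(\textsc{SProfit}_i(c) + \langle c,x\rangle\bigr),
\end{equation*}
while weak duality already gives $\textsc{BuyManyRev}_i(x)\leq \textsc{BuyManyProfit}_i(c)+\langle c,x\rangle$ for every $c\succeq 0$. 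If I can prove, for some universal constant $\beta\geq 1$, the ``inflated-cost'' profit bound
\begin{equation*}
\textsc{BuyManyProfit}_i(\beta c)\leq O(\log m)\cdot \textsc{SProfit}_i(c)\qquad\text{for all }c\succeq 0,
\end{equation*}
then evaluating at $c^*:=\argmin_c\bigl(\textsc{SProfit}_i(c)+\langle c,x_i^*\rangle\bigr)$ and applying weak duality at cost $\beta c^*$ gives $\textsc{BuyManyRev}_i(x_i^*)\leq O(\log m)\cdot \textsc{SProfit}_i(c^*)+\beta\langle c^*,x_i^*\rangle\leq O(\log m)\cdot \textsc{SRev}_i(x_i^*)$, which is exactly the per-buyer inequality required in Step~1.

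\textbf{Step 3: The inflated-cost profit bound (main obstacle).} This last inequality is the substantive step and is where the hard part lies. As the ``Our techniques'' paragraph flags, the naive ratio $\textsc{BuyManyProfit}_i(c)/\textsc{SProfit}_i(c)$ at the \emph{same} $c$ can be as large as $\Omega(m)$, so a genuine multiplicative slack $\beta>1$ in the cost vector is essential. The plan is to adapt the bucketing argument of~\cite{chawla2019buy} while simultaneously tracking allocations: take a near-optimal buy-many mechanism $\mec$ for $\textsc{BuyManyProfit}_i(\beta c)$, classify its menu options for buyer~$i$ by the effective per-unit-probability price $p/q_j$ charged for item~$j$, throw away options with $p/q_j<c_j$ (their profit contribution at cost $\beta c$ is already nonpositive after the inflation) or $p/q_j> \poly(m)\cdot c_j$ (purchased with negligible mass, by a Markov-type argument on $\sum_j q_j\leq 1$), and partition what remains into $O(\log m)$ geometric buckets. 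Within the revenue-dominant bucket the unit-demand hypothesis and the buy-many constraint let one extract a deterministic item pricing whose expected revenue is within a constant of the bucket's profit contribution and whose expected per-item allocation is at most a constant multiple of that of $\mec$; this is the step where the buy-many constraint is indispensable, because it prevents the menu from charging super-additive premia that cannot be undercut by individual item prices. The technical obstacle is precisely this simultaneous control of revenue and allocation: the original argument of~\cite{chawla2019buy} tracks only revenue, so the bucketing and extraction steps need to be refined to ensure the constructed item pricing does not inflate expected allocations beyond a constant factor times those of $\mec$. Combining this bound with Steps~1 and~2 yields $\textsc{ExAnte-BuyManyRev}(\dists)\leq O(\log m)\cdot \textsc{ExAnte-SRev}(\dists)$, which is the content of \Cref{cor:eabuymany-vs-srev}.
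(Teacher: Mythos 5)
Your Steps 1 and 2 reproduce the paper's reduction exactly, up to presentation. The paper also decomposes the ex-ante program per buyer (Theorem~\ref{thm:constrained-alloc-approx}), passes to a Lagrangian by choosing $c^* = \nabla\srev(x^*)$ (i.e., the dual optimizer certifying that $x^*$ maximizes $\srev(x)-\langle c,x\rangle$), and uses the same cost-inflation trick with $\beta = 2$ to back out the ex-ante bound. Your invocation of Sion's minimax theorem to get $\textsc{SRev}_i(x)=\min_{c\succeq 0}(\textsc{SProfit}_i(c)+\langle c,x\rangle)$ is what the paper obtains more directly from concavity of $\srev(\cdot)$ over random item pricings (monotonicity gives $c^*\succeq 0$); the chain of inequalities in your Step 2 is essentially verbatim the one in Section~\ref{sec:exante}. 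You also correctly flag the necessity of cost inflation; the paper's Theorem~\ref{thm:cost-gap-example} is exactly your $\Omega(m)$ obstruction at $\beta=1$.

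The gap is Step 3. You correctly identify the key lemma — $\buymanyprofit_{2c}(\dist)\le O(\log m)\,\sprofit_c(\dist)$ — but you do not prove it, and your sketch both diverges from the paper's method and carries unnecessary baggage. The paper's proof (Appendix, Theorem~\ref{buy-many single buyer with costs}) is not a discrete geometric bucketing: it defines from the optimal buy-many pricing $p$ a cost-aware item pricing $q$ with $q_i=\min_{\lambda}\frac{p(\lambda)-c\cdot\lambda}{\lambda_i}+c_i$, interpolates $q_\alpha=\alpha c+(1-\alpha)q$, uses the envelope theorem to write the per-type profit under $q_\alpha$ as $(1-\alpha)\frac{d}{d\alpha}u(v,\alpha)$, and then draws $\alpha$ from density $\frac{1}{(1-\alpha)\ln 4m}$ on $[-1,1-\frac{1}{2m}]$ so that the integral telescopes to $u(v,1-\tfrac{1}{2m})-u(v,-1)\ge\tfrac12\profit_{p,2c}(v)$. (The two endpoints are where the buy-many constraint and the $m$-fold sum over items enter, respectively.) Also, a conceptual point about your sketch: once you are in the Lagrangian profit formulation you do \emph{not} need the constructed item pricing to "not inflate expected allocations beyond a constant factor times those of $\mec$" as a separate requirement — the allocation is already penalized through the cost term $\langle c,\textsc{Alloc}\rangle$, so the only quantity you need to compare is profit. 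Trying to simultaneously control revenue and allocation in Step 3 re-introduces the difficulty the Lagrangian was designed to remove, and it is this conflation that is the real obstacle in your sketch, not the bucketing per se.
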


{\bf A multi-dimensional contention resolution scheme.} The second part of our argument relates the ex-ante item pricing revenue to the revenue of a non-adaptive sequential item pricing for unit-demand or additive buyers. This part employs an argument reminiscent of prophet inequalities and contention resolutions schemes (see, e.g., \citet*{feldman2016online}). Specifically, let $x_i$ denote the allocation vector for agent $i$ in the ex-ante-optimal item pricing revenue. For any given fixed ordering over the buyers, we construct pricings $\{q_i\}$ such that: (1) every buyer $i$ obtains an expected allocation of at least $x_i/2$, and (2) the revenue obtained by the item pricing $q_i$ from agent $i$ is at least half the agent's contribution to the ex-ante item pricing revenue. Both of these properties are easy to observe for additive buyers, but challenging for unit-demand buyers. The challenge in showing (1) is that the set of items available when agent $i$ arrives depends upon the instantiations of previous agents' values: as this set changes, the buyer's choice of what to buy also changes. (2) is challenging because revenue is not linear in allocation probability. Our key observation is that for unit-demand buyers, item pricing revenue exhibits concavity as a function of allocation probabilities: given any item pricing $p$ with allocation probabilities $x$ and any $\alpha<1$, we can find another item pricing that allocates items with probabilities at most $\alpha x$ and obtains at least an $\alpha$ fraction of $\ps$'s revenue. We leave open the question of extending this type of multi-dimensional contention resolution scheme (and in particular, achieving the following theorem) to other valuation functions.

\begin{theorem}\label{thm:seq-pricing-approx-exante}
For any joint value distribution $\dists$ over $n$ unit-demand or additive buyers and $m$ items, 
\begin{equation*}
    \textsc{ExAnte-SRev}(\dists)\leq 2 \srev(\dists).
\end{equation*}
\end{theorem}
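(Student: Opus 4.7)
The plan is to construct a sequential item pricing that mimics the ex-ante optimum within a constant factor, using a scheme reminiscent of prophet inequalities and online contention resolution schemes. Let $(\ps_1,\ldots,\ps_n)$ be the (possibly randomized) per-buyer item pricings achieving $\exantesrev(\dists)$, with per-buyer expected allocation vectors $\xs_i$ satisfying $\sum_i \xs_i \preceq \mathbf{1}$ and total revenue $R$.

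I would first establish the concavity property highlighted in the introduction: for any item pricing $\ps$ with allocation $\xs$ and revenue $R'$, and any $\alpha \in (0,1)$, the randomized item pricing that uses $\ps$ with probability $\alpha$ and offers no items with probability $1-\alpha$ has allocation $\preceq \alpha \xs$ and revenue $\geq \alpha R'$. Applying this to each $\ps_i$ yields scaled pricings $\ps_i'$ with $\sum_i \xs_i' \preceq \alpha \mathbf{1}$ and total revenue at least $\alpha R$. The proposed mechanism offers $\ps_i'$ to buyer $i$ restricted to the set $S_i$ of still-available items. A short case analysis relying on the unit-demand structure shows that the revenue from buyer $i$ weakly dominates a ``virtual'' revenue in which she is imagined to only consider her globally favorite item $\phi_i$ under the full pricing $\ps_i'$ and the sale succeeds only if $\phi_i \in S_i$: if $\phi_i \in S_i$ then $\phi_i$ remains her best choice inside $S_i$ so the real sale equals the virtual one, and otherwise the virtual revenue is zero while the real is nonnegative. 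Since $\phi_i$ depends only on $v_i$ and $S_i$ only on $v_1,\ldots,v_{i-1}$, they are independent, so the expected virtual revenue equals $\sum_j y_{i,j}'\, p_{i,j}'\, \Pr[j \in S_i]$, where $y_{i,j}' = \Pr[\phi_i = j] = x_{i,j}'$.

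The main obstacle is lower bounding $\Pr[j \in S_i]$. A naive union bound over prior real sales is subtle because restricting buyer $k$'s offer in the real mechanism can redirect her toward items she would not have attempted under the full pricing, so her real probability of purchasing $j$ need not be bounded by $y_{k,j}'$. To circumvent this I would compare the real revenue to that of an even more restrictive mechanism that refuses any sale whose selected item differs from the buyer's global favorite $\phi_k$. In this restrictive mechanism, item $j$ is sold by buyer $k$ with probability at most $y_{k,j}'\, \Pr[j \in S_k] \leq y_{k,j}'$, so a union bound gives $\Pr[j \in S_i] \geq 1 - \sum_{k<i} y_{k,j}' \geq 1 - \alpha$; and the dominance argument above ensures that the true real revenue is at least the restrictive scheme's revenue. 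This yields sequential revenue at least $(1-\alpha)\alpha R$, which is maximized at $\alpha = 1/2$; a tighter coupling of the scaling and the availability loss, avoiding double-counting these two sources of slack, yields the sharper constant $2$ stated in the theorem.
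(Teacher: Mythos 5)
Your high-level plan matches the paper's (an OCRS-style argument that scales down each buyer's allocation to make room for later buyers), but there are two real gaps. First, the dominance claim used to bound $\Pr[j\in S_i]$ is false: comparing the sequential pricing (where buyers may substitute) to the ``restrictive'' scheme (where a buyer only gets her global favorite $\phi_i$) does not give a one-sided inequality on revenue. Both a case analysis and a small example show that the restrictive scheme's available sets satisfy $S_i^{\mathrm{restrictive}}\supseteq S_i^{\mathrm{real}}$, so the restrictive mechanism can in fact earn \emph{more} than the real one --- e.g., a buyer $k$ substitutes to a cheap item $j'$ under the real mechanism, removing $j'$ for a later buyer who would have paid a lot for it, while the restrictive mechanism preserves $j'$. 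Thus ``real revenue $\ge$ restrictive revenue'' is not justified, and since the restrictive scheme is not itself an item pricing, its revenue cannot be used as a lower bound on $\srev$ directly. Second, even granting this, your argument yields $(1-\alpha)\alpha R$, i.e.\ a factor of $4$; the claimed upgrade to $2$ by ``avoiding double-counting'' is exactly the hard part and is left unaddressed.

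The paper's proof resolves both issues with a different scaling lemma (its Lemma~4.1): given a deterministic pricing $p$ and a random available set $S$ with expected allocation $x^*=\E_S[\alloc_{p,S}]$, for any $y\preceq x^*$ there is a \emph{random restriction} of $p$ (a mixture of the pricings $p_T$ that set $p_j=\infty$ outside $T\subseteq S$) whose expected allocation equals $y$ exactly and whose revenue equals $y\cdot p$ exactly, because each sold item $j$ is always sold at its full price $p_j$. Using this, the paper fixes each buyer's allocation to be \emph{exactly} $x_i/2$, which makes the union bound give $\Pr[j\in S_i]\ge 1/2$ by induction; combined with the observation that availability $\ge 1/2$ already implies $\E[\alloc_{p_i,S_i}]\succeq \tfrac12 x_i$, Lemma~4.1 then hands back a pricing $q_i$ with revenue exactly $\tfrac12 x_i\cdot p_i = \tfrac12\srev(\dist_i,x_i)$. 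In other words, the scaling and the availability loss are the same $1/2$ rather than compounding. Your ``offer $p_i$ with probability $\alpha$, nothing otherwise'' mixture loses revenue linearly in $\alpha$ and then again in $(1-\alpha)$; the set-restriction mixture is the tool you were missing to collapse these two losses into one.
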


Theorem~\ref{thm:main} follows immediately from Theorems~\ref{cor:eabuymany-vs-srev} and \ref{thm:seq-pricing-approx-exante}. We prove Theorem~\ref{cor:eabuymany-vs-srev} in Section~\ref{sec:exante} and Theorem~\ref{thm:seq-pricing-approx-exante} in Section~\ref{sec:approx}.

In Section~\ref{discussion}, we mention some \textbf{motivating examples} regarding multi-buyer buy-many mechanisms, and address what are the main difficulties of extending our results to all valuation functions.



\subsection*{Further related work}

The buy-many constraint was first proposed as an alternative to unconstrained revenue maximization by \citet*{briest2015pricing} in the context of a single unit-demand buyer. \citet*{chawla2019buy} extended the notion to arbitrary single buyer settings. \citet*{chawla2021pricing} present improved approximations for buy-many mechanisms in single buyer settings where buyers' values satisfy an additional ordering property that includes, for example, the so-called FedEx setting. In a different direction, \citet*{chawla2020menu} study the menu size complexity and revenue continuity of single buyer buy-many mechanisms. All of these works focus on single buyer problems.

There is extensive literature on approximating the optimal non-buy-many revenue under various constraints on the buyers' value distributions in both single-buyer and multi-buyer settings. Interestingly, many of these approximation results are achieved through sequential item pricing (or, in some cases, grand bundle pricing) mechanisms. For example, \citet*{chawla2010multi} show that for multiple settings of matroid feasibility constraints, sequential item pricings give a constant approximation to the optimal non-buy-many revenue. In particular, for multiple unit-demand buyers with independent values over all items, sequential item pricing gives a 6.75-approximation to the optimal revenue, and this competitive ratio was improved to 4 by \citet*{alaei2014bayesian} and generalized to a setting where each item has multiple units. For buyers with more general distribution, either a sequential pricing mechanism with personalized item prices or a sequential item pricing with anonymous prices and entry fee gives a constant approximation to the optimal revenue when there are multiple fractional subadditive buyers \cite{cai2017simple} and an $O(\log \log m)$ approximation when buyers are subadditive \cite{duettinglog,cai2017simple}. \citet*{MaS21} consider additive-valued buyers and a seller facing production costs, a setting that we revisit in our proof of Theorem~\ref{thm:constrained-alloc-approx}, and achieve approximations using two-part tariffs. Also see \cite{chawla2007algorithmic,chawla2015power,kleinberg2012matroid,hart2013menu,li2013revenue,babaioff2014simple,yao2014n,rubinstein2018simple} for some other previous work on using simple mechanisms to approximate the optimal non-buy-many revenue. For sequential item pricings to be able to approximate the revenue of the optimal non-buy-many mechanisms, an important assumption is that each buyer should have independent item values, which we do not assume in this work.

Finally, ex-ante relaxations, first introduced to multi-buyer mechanism design by \citet{alaei2014bayesian} and \citet{yan2011mechanism}, have emerged as a powerful tool for simplifying multi-buyer mechanism design problems by breaking them up into their single-agent counterparts. For example, \citet{chawla2016mechanism} and \citet{cai2017simple} employ this approach for designing sequential mechanisms that approximate the optimal revenue for multiple agents with subadditive values. Such a technique is also used for analyzing the setting where buyers have non-linear utilities \cite{alaei2013simple,feng2019optimal,feng2020simple}, and determining the revelation gap of the optimal mechanisms \cite{feng2018end,feng2021revelation}. 


\section{Definitions}
\label{sec:prelim}

We study the multidimensional mechanism design problem where the seller has $m$ heterogeneous items to sell to $n$ buyers with independent value distributions, and aims to maximize the revenue. The buyer $i$'s value vector over items is specified by a distribution $\dist_i$ over all value functions $v_i:2^{[m]}\rightarrow \mathbb{R}^{\geq 0}$. Each of these such functions $v_i$ assigns a non-negative value to any subset of items $S\subseteq [m]$. 
Let $\dists=\dist_1\times\cdots\times\dist_n$ denote the joint distribution over all values. Let $\Delta_m = [0,1]^m$ be the set of all possible randomized allocations.

\paragraph{Unit-demand and additive Buyers} In this paper, we focus on settings where every buyer is unit-demand or where every buyer is additive valued. We say that a buyer is unit-demand over all items, if the buyer is only interested in purchasing one item, and her value for any set of items is solely determined by the item that is most valuable to her. In other words, for any set of items $S \subseteq [m]$, $v_i(S) = \max\limits_{j\in S} v_i(\{j\})$. We say that a buyer is additive if for all $S \subseteq [m]$, $v_i(S) = \sum\limits_{j\in S} v_i(\{j\})$. 
For ease of notation, let $v_i(\{j\})=v_{ij}$. 
We make no assumptions on each buyer's value distribution $\dist_i$ -- the buyer's values for different items can be arbitrarily correlated.

\paragraph{Single-buyer Mechanisms} By the taxation principle, any single-buyer mechanism is equivalent to a menu of options, and the buyer can select an outcome that maximizes her utility. Without loss of generality, we can assume that the menu assigns a price to {\em every} randomized allocation, a.k.a. lottery, $\lambda\in\udelta$, and can therefore simply represent the mechanism by a pricing function $p$. $p(\lambda)$ denotes the price of the lottery $\lambda$.
For a buyer of type $v_i$, her (expected) value for a lottery $\lambda$ is defined by $v_i(\lambda):=\sum_{j}v_{ij}\lambda_j$, and her utility for the lottery is defined to be $u_{v_i,p}(\lambda):=v_i(\lambda)-p(\lambda)$. When $p$ is clear from the context, we drop the subscript and write the buyer's utility as $u_{v_i}(\lambda)$. Henceforth, we will refer interchangeably to mechanisms as pricing functions.

Given a pricing function $p$, a buyer of type $v$ chooses the utility-maximizing lottery, denoted $\lambda_{v,p} := \argmax_\lambda u_{v,p}(\lambda)$. 
The buyer's utility in the mechanism $p$ is given by $u_p(v):=v(\lambda_{v,p})-p(\lambda_{v,p})$; the buyer's payment is $\rev_p(v):=p(\lambda_{v,p})$. We write the revenue of the mechanism under buyer distribution $\dist$ as $\rev_{p}(\dist)=\E_{v\sim\dist}\rev_p(v)$. The mechanism is called a buy-one mechanism since the buyer only purchases one option from the menu. 

\paragraph{Single-buyer Buy-Many Mechanisms} In single buyer settings, if the buyer is allowed to purchase multiple options from a mechanism's menu, we call the mechanism buy-many. In particular, in a buy-many mechanism the buyer can purchase a (random) sequence of lotteries, where each lottery in the sequence can depend adaptively on the instantiations of previous lotteries. At the end of the process, the buyer gets the union of all allocated items in each step and pays the sum of the prices of all purchased lotteries. 


Any buy-many mechanism can be described by a buy-one pricing function that satisfies a buy-many constraint. Intuitively, a buy-one pricing function satisfies the buy-many constraint if the buyer always prefers to purchase a single option from the menu, even if she has the option to adaptively interact with the mechanism multiple times.\footnote{For formal definitions, see \cite{chawla2019buy}.} For example, for a pricing function $p$, let $\hat{p}_i$ denote the minimum cost of acquiring item $i$ by repeatedly purchasing some lottery until the item is instantiated. Then, the buy-many constraint implies that for any $\lambda$, $p(\lambda)\le \hat{p}\cdot \lambda$. Pricing functions that satisfy the buy-many constraint are called buy-many pricings, and we use $\textsc{BuyMany}$ to denote the set of such functions. 


Let $\buymanyrev(\dist):=\max_{p\in\textsc{BuyMany}} \rev_p(\dist)$ denote the revenue of the optimal buy-many mechanism for a buyer with value distribution $\dist$.


\paragraph{Item Pricings and Sequential Item Pricings} In a single-buyer setting, a (deterministic) item pricing mechanism sets an item price $p_j\in\R_+$ for every item $j$; a buyer with value function $v_i$ purchases the item $j$ that maximizes $v_{ij}-p_j$. We use the price vector $p=(p_1,\cdots,p_m)$ to denote the item pricing. Denote by $\srev(\dist)$ the optimal revenue obtained by any item pricing for a buyer with distribution $\dist$. We also define random item pricing mechanisms as distributions over deterministic item pricings; we continue to use $p$ to denote such pricings, although $p$ is now a random variable. 

A sequential item pricing for a multi-buyer setting specifies a serving order $\sigma$ being a permutation over the $n$ buyers, and $n$ item pricings $p_1,\cdots,p_n$, such that at step $i$ the seller posts the pricing $p_{\sigma(i)}$ to buyer $\sigma(i)$, and the buyer purchases her favorite among the items that are still available. We use $\srev(\dists)$ to denote the optimal revenue obtained by a sequential item pricing mechanism for buyers with values drawn from the joint distribution $\dists$.

\paragraph{Profit Maximization under Production Costs} For a single-buyer setting, we study the profit-maximizing problem of the seller where each item has a production cost. Let $\cs=(c_1,c_2,\cdots,c_m)\in\R_+^m$ denote the vector of production costs. 
Then the profit of the single-buyer mechanism given by pricing $p$ is defined to be the revenue minus the production costs of the items:
\begin{equation*}
\profit_{p,\cs}(\dist) = \E_{v\sim\dist}[p(\lambda_{v,p})-\lambda_{v,p}\cdot\cs].
\end{equation*}
We denote by $\sprofit_{\cs}(\dist)=\max_{\textrm{item pricing }p}\profit_{p,\cs}(\dist)$ the optimal profit achievable by item pricings for the buyer with distribution $\dist$ when there are production costs $\cs$ for all items. Similarly define $\buymanyprofit_{\cs}(\dist)=\max_{\textrm{buy-many }p}\profit_{p,\cs}(\dist)$ to be the optimal profit achievable by buy-many mechanisms for the buyer with distribution $\dist$ for the setting with production costs $\cs$.


\subsection*{Ex-ante constrained revenue}

Ex-ante relaxations are a powerful technique for reducing multi-buyer mechanism design problems to their single-buyer counterparts. The key idea is to relax the ex-post supply constraint on items to an ex-ante feasibility constraint that requires each item to be sold at most once in expectation. 

Recall that $x_{ij}$ denotes the probability of allocating item $j$ to buyer $i$; $x_i = (x_{i1}, \cdots, x_{im})$ denotes the vector of allocations of all items to buyer $i$; and $x = (x_1, \cdots, x_n)$ denote the vector of all allocations.


We say that a single-buyer pricing function $p_i$ satisfies the ex-ante constraint $x_i$ with respect to the value distribution $\dist_i$ if for all $j\in [m]$, the expected allocation of item $j$ to the buyer is no more than $x_{ij}$: $\E_{v\sim\dist_i}[\lambda_{v,p_i}]\le x_i$. Note that this definition extends to random pricing functions $p$ in the straightforward manner: we want the expected allocation to be bounded by $x_i$ where the expectation is taken over both the buyer's value and the randomness in the mechanism. That is, $\E_{v\sim\dist_i, p}[\lambda_{v,p_i}]\le x_i$. 

For a single buyer $i$ and an ex-ante constraint $x_i$, we can now define the optimal revenue that can be obtained from the buyer subject to an ex-ante constraint for various classes of mechanisms. In particular, let $\Delta_{IP}$ denote the space of all distributions over item pricings, and $\Delta_{BM}$ denote the space of all distributions over buy-many pricings. Then we define:
\begin{align*}
    \srev(\dist_i, x_i) & = \max_{p\in \Delta_{IP}:\, p \text{ satisfies $x_i$ w.r.t. $\dist_i$}} \rev_p(\dist_i), & \text{and, }\\
    \adaptrev(\dist_i, x_i) & = \max_{p\in \Delta_{BM}:\, p \text{ satisfies $x_i$ w.r.t. $\dist_i$}} \rev_p(\dist_i). &
\end{align*}
\noindent
Given a combined vector $x$ of ex-ante constraints for every buyer $i\in [m]$, we can write
the collective revenue of the optimal single-buyer mechanisms that satisfy these constraints as:
\begin{align*}
    \exantesrev(\dists,x) & = \sum_i \srev(\dist_i, x_i), & \text{and, }\\
    \exantebuymanyrev(\dists,x) & = \sum_i \buymanyrev(\dist_i, x_i),
\end{align*}
\noindent
Finally, we can define the ex-ante relaxation for each class of mechanisms. 
\begin{align*}
    \exantesrev(\dists) & = \max_{x:\, \sum_i x_{ij}\le 1 \forall j\in [m]} \exantesrev(\dists, x), & \text{and, }\\
    \exantebuymanyrev(\dists) & = \max_{x:\, \sum_i x_{ij}\le 1 \forall j\in [m]} \exantebuymanyrev(\dists, x),
\end{align*}

\subsection*{Some settings captured by the ex-ante relaxation}
Our ex-ante relaxation captures approaches that define the buy-many constraint as a restriction on the effective pricing faced by any individual buyer participating in the mechanism. We now describe some specific such settings. We start with one of the simplest settings, in which buyers arrive one after the other and each buyer faces a single-buyer mechanism.

\paragraph{Sequential Buy-Many Mechanisms}  Sequential mechanisms make offers to each buyer sequentially in a pre-specified order. The $i$-th buyer in the order is asked to choose which items to purchase among the subset of items that remain after buyers $1$ through $i-1$ have made their choices. A natural definition for the buy-many constraint for this multi-buyer setting  boils down to offering a buy-many constrained mechanism to each buyer for any subset of items remaining, i.e. the full-mechanism can be defined as a collection of single-buyer mechanisms $\mathcal{M}_{i,S} \in \textsc{BuyMany}$ for any buyer $i$ and any subset $S$ of remaining items.


\paragraph{Ex-Post Buy-Many Mechanisms} A more flexible design space for multi-buyer buy-many mechanisms is to consider direct mechanisms in which buyers truthfully submit their valuations and conditional on the valuations of other buyers each buyer is faced with a single buyer buy-many mechanism. Buy-many mechanisms for this setting are specified as a collection of single buyer buy-many mechanisms $\mathcal{M}_{i,\vec{v}_{-i}} \in \textsc{BuyMany}$ for any buyer $i$ and any combination of valuation functions $v_{-i}$ for the other buyers. 

\paragraph{Bayesian Buy-Many Mechanisms} Another potential definition of multi-buyer buy-many mechanisms is to consider Bayesian settings in which we require the options any single buyer faces to be buy-many in expectation over the valuations of other buyers. If $q_i(v_i, \vec{v}_{-i})$ and $p_i(v_i, \vec{v}_{-i})$ is the  allocation and price offered to buyer $i$ when her value is $v_i$ and the other buyers have valuations $\vec{v}_{-i}$, we would require that the single buyer mechanism with allocation probabilities $q_i(v_i) \triangleq \mathbb{E}_{\vec{v}_{-i}} q_i(v_i, \vec{v}_{-i})$ and price $p_i(v_i) \triangleq \mathbb{E}_{\vec{v}_{-i}} p_i(v_i, \vec{v}_{-i})$ to satisfy the buy-many constraint.

\section{Relating the Ex-Ante Relaxations}
\label{sec:exante}

In this section we bound the gap between the ex-ante optimal buy many revenue and the ex-ante optimal item pricing revenue when the seller faces many buyers. This is a generalization of single-buyer buy-many revenue approximations to supply constrained settings. We emphasize that for the results in this section we {\em do not} require any assumptions on the buyer's valuation function, such as that it is unit-demand or additive. 

A note on notation: since we consider a single-buyer problem in this section, we drop the subscript $i$ from most notation and simply denote buyer $i$'s valuation function as $v$; her allocation vector as $x$; the probability that item $j$ is allocated to the buyer as $x_j$; etc. We will write the ex-ante buy many and item pricing revenues of this buyer simply as $\buymanyrev(\dist, x)$ and $\srev(\dist, x)$ respectively.


\begin{theorem}\label{thm:constrained-alloc-approx}
For any single buyer with value distribution $\dist$ over $m$ items and any ex-ante supply constraint $x\in\udelta$, 
\begin{equation*}
    \exantebuymanyrev(\dist,x)\leq O(\log m) \exantesrev(\dist,x).
\end{equation*}
\end{theorem}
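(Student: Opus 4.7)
The plan is to transfer the ex-ante allocation constraint into a production cost via Lagrangian duality, reducing the problem to a single-buyer profit comparison between buy-many mechanisms and item pricings. For any $c \succeq 0$ and any pricing $p$ with $\E_v[\lambda_{v,p}] \preceq x$, the identity $\rev_p(\dist) = \profit_{p,c}(\dist) + \E_v[c\cdot\lambda_{v,p}]$ gives $\rev_p(\dist) \leq \profit_{p,c}(\dist) + c\cdot x$, so
\begin{equation*}
\exantebuymanyrev(\dist,x) \leq \buymanyprofit_c(\dist) + c\cdot x \quad \text{for every } c \succeq 0.
\end{equation*}
On the item-pricing side, $\exantesrev(\dist,x)$ is the value of a linear program over distributions over item pricings subject to a linear expected-allocation constraint, so strong LP duality yields a dual cost vector $c^{**} \succeq 0$ with $\exantesrev(\dist,x) = \sprofit_{c^{**}}(\dist) + c^{**}\cdot x$.

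The heart of the argument is a single-buyer lemma: there is a universal constant $\alpha > 1$ such that for every unit-demand $\dist$ and every $c \succeq 0$,
\begin{equation*}
\buymanyprofit_{\alpha c}(\dist) \leq O(\log m)\,\sprofit_c(\dist).
\end{equation*}
Given the lemma, I would instantiate the Lagrangian bound at $c = \alpha c^{**}$ and apply it at $c = c^{**}$ to obtain
\begin{equation*}
\exantebuymanyrev(\dist,x) \leq \buymanyprofit_{\alpha c^{**}}(\dist) + \alpha c^{**}\cdot x \leq O(\log m)\,\sprofit_{c^{**}}(\dist) + O(1)\,c^{**}\cdot x \leq O(\log m)\,\exantesrev(\dist,x),
\end{equation*}
which is Theorem~\ref{thm:constrained-alloc-approx}.

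To establish the lemma I would adapt the Briest-style bucketing argument of Chawla--Teng--Tzamos from the revenue-only setting. Let $p^*$ achieve $\buymanyprofit_{\alpha c}(\dist)$ and let $\hat{p}_j$ denote its effective per-item price. The buy-many property yields $u_{p^*}(v) \geq \max_j (v_j - \hat{p}_j)^+$, and unit-demand together with $\sum_j \lambda_j \leq 1$ gives $v(\lambda_{v,p^*}) - \alpha c \cdot \lambda_{v,p^*} \leq \max_j (v_j - \alpha c_j)^+$, so
\begin{equation*}
\buymanyprofit_{\alpha c}(\dist) \leq \E_v\bigl[\max_j (v_j - \alpha c_j)^+ - \max_j(v_j - \hat{p}_j)^+\bigr].
\end{equation*}
I would then bucket items by $\hat{p}_j$ into $L = O(\log m)$ geometric classes and, for each bucket, study the ``single-bucket'' item pricing that offers items in the bucket at its lower endpoint and $\infty$ elsewhere. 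A randomized choice of bucket captures an $\Omega(1/L)$ fraction of the revenue on the right-hand side above. The new ingredient relative to the revenue-only setting is a cost-control step: within each bucket, $p^*$ would not profitably allocate item $j$ unless $\hat{p}_j$ compensates for $\alpha c_j$, which I would use to bound the derived item pricing's expected cost by a constant times $\alpha \E_v[c\cdot\lambda_{v,p^*}]$, so the multiplicative $\alpha$-slack absorbs the constant-factor overhead from bucketing. The main obstacle is precisely this cost-control step---the introduction warns that without multiplicative slack the two profits can differ by $\Omega(m)$, so the bucketing has to be arranged so that the revenue loses only $O(\log m)$ \emph{and} the expected cost is absorbed by a universal constant $\alpha$ simultaneously; making both bounds come out with a constant $\alpha$ (rather than one scaling with $\log m$) is the technical crux.
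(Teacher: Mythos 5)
Your outer reduction is correct and is essentially the same as the paper's. The paper phrases it via the gradient of the concave function $\srev(x)$: fixing $x^o$, it sets $c := \nabla\srev(x^o)\succeq\zeros$, observes that $\srev(x^o)=\sprofit_c(\dist)+c\cdot x^o$ (the random pricing achieving $\srev(x^o)$ is simultaneously the unconstrained optimal item pricing under costs $c$), and then chains through $\buymanyprofit_{2c}$ exactly as you do. Your LP-duality phrasing produces the same dual certificate $c^{**}$; the algebra in your final display is fine. The bicriteria key lemma you isolate is precisely the paper's Theorem~\ref{buy-many single buyer with costs} with $\alpha=2$: $\buymanyprofit_{2c}(\dist)\le 2\ln 4m\,\sprofit_c(\dist)$.

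The genuine gap is in your proof sketch of the key lemma, and you yourself flag it: the cost-control step in the bucketing argument is unresolved, and that is not a detail --- it is the entire technical content of the theorem (the paper defers this proof to the appendix because it is where all the work is). The paper does not prove it by bucketing at all. Instead it derives the candidate item prices $q_j=\min_\lambda \frac{p(\lambda)-c\cdot\lambda}{\lambda_j}+c_j$ from the buy-many pricing $p$, forms the one-parameter family $q_\alpha=\alpha c+(1-\alpha)q$, and uses the envelope theorem to write the profit of $q_\alpha$ as $(1-\alpha)\frac{d}{d\alpha}u(v,\alpha)$. Two anchor points control the integral of this derivative: at $\alpha=-1$ the normalization $p(\lambda)\ge 2c\cdot\lambda$ (achievable WLOG precisely because of the factor-$2$ slack on costs) forces $q_{-1}$ to dominate $p$ pointwise, giving $u(v,-1)\le u_v(p)$; at $\alpha=1-\frac{1}{2m}$ one gets $u(v,1-\frac{1}{2m})\ge u_v(p)+\frac12\profit_{p,2c}(v)$. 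Drawing $\alpha$ from $[-1,1-\frac{1}{2m}]$ with density $\frac{1}{(1-\alpha)\ln 4m}$ then produces an item pricing whose expected profit is at least $\frac{1}{2\ln 4m}\profit_{p,2c}(v)$ pointwise in $v$. So the multiplicative slack in costs does not get ``absorbed'' bucket by bucket; it is used once, globally, to push the lower integration limit to $\alpha=-1$ and thereby make $\ln\frac{1-\alpha_{\min}}{1-\alpha_{\max}}=\ln 4m$ finite. If you want to salvage a bucketing proof you would need a replacement for that normalization trick; as written your plan does not yet have one, so the lemma --- and with it the theorem --- is not established by your proposal.
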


\noindent
Applying this theorem to each of $n$ buyers, we obtain the following corollary.
\begin{numberedtheorem}{\ref{cor:eabuymany-vs-srev}}
For any joint value distribution $\dists$ over $n$ buyers and $m$ items, 
\begin{equation*}
    \exantebuymanyrev(\dists)\leq O(\log m) \exantesrev(\dists).
\end{equation*}
\end{numberedtheorem}

Before we present a formal proof of Theorem~\ref{thm:constrained-alloc-approx}, let us describe the main ideas. Theorem~\ref{thm:constrained-alloc-approx} is a generalization of Theorem 1.1 from \citet{chawla2019buy} \remove{for unit-demand buyers}, which shows that the ratio of $\buymanyrev$ and $\srev$ is bounded by $O(\log m)$ in the absence of an ex-ante supply constraint. The proof technique of \citeauthor{chawla2019buy} does not directly lend itself to the ex-ante setting because it does not provide much control over the allocation probability of the random item pricing it produces. Indeed, the (random) item pricing it returns is independent of the buyer's value distribution, whereas the allocation probabilities (that are expectations over values drawn from the distribution) necessarily depend on the value distribution. Instead of applying the approach of \citeauthor{chawla2019buy} directly, we first consider a Lagrangian version of the supply constrained problem. 


To simplify the following discussion, we will hide the argument $\dist$ from the respective revenue benchmarks. Viewing $\srev(x)$ and $\buymanyrev(x)$ as two multi-variate functions over $x\in\udelta$, we first observe that for any fixed ex-ante constraint $x^o$, there exists a cost vector $\cs^o$ such that $x^o$ is the solution to the optimization problem $\max_x (\srev(x)-\cs^o\cdot x)$. Indeed, because $\srev(x)$ is concave\footnote{In fact, the function $\rev(x)$ defined as maximum revenue from any restricted set of mechanism with allocations at most $x$ is concave. This is because for any two allocations $x$ and $y$ and coefficient $1\geq \alpha \geq 0$, one can consider mechanisms $\M(x)$ and $\M(y)$ that define $\rev(x)$ and $\rev(y)$, and run the former with probability $\alpha$ and the latter with probability $(1-\alpha)$. Then $\rev(\alpha x+(1-\alpha)y)\geq \alpha \rev(x)+(1-\alpha)\rev(y)$.}, $\cs^o=\nabla\srev(x^o)$ is such a function. Furthermore, $\cs^o$ is a non-negative vector, and so the gap between $\srev(x^o)$ and $\buymanyrev(x^o)$ is bounded by the gap between $\srev(x^o)-\cs\cdot x^o$ and $\buymanyrev(x^o)-\cs\cdot x^o$. This motivates studying the Lagrangian problem of maximizing the profit of a mechanism subject to production costs $\cs^o$. In particular, for any value distribution $\dist$, we have:
\[
\max_{x} \frac{\exantebuymanyrev(\dist,x)}{\exantesrev(\dist,x)} \le \max_{\cs} \frac{\buymanyprofit_{\cs}(\dist)}{\sprofit_{\cs}(\dist)}
\]
Unfortunately, the gap on the right hand side can be very large:
\begin{theorem}\label{thm:cost-gap-example}
 There exists a unit-demand value distribution $\dist$ over $m$ items and a cost vector $\cs\in\R^m$, such that $\buymanyprofit_\cs=\Omega(m)\sprofit_\cs(\dist)$.
\end{theorem}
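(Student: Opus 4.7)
The plan is to exhibit an explicit value distribution $\dist$ over $m$ items and a cost vector $\cs$ for which $\buymanyprofit_\cs(\dist) \geq \Omega(m) \cdot \sprofit_\cs(\dist)$. The guiding intuition is that one can construct a distribution where item pricing is trapped into razor-thin profit margins by competition among items, while a buy-many menu of lotteries can extract $\Omega(1)$ profit from each of many types by leveraging the flexibility of randomized allocations.

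Concretely, I would use $m$ items together with a uniform mixture over $m$ buyer types, where type $k$ has a slightly higher value on a ``preferred'' item $k$ (or a small subset containing $k$) and a lower but positive value on every other item. The cost vector $c_j$ is chosen just below the typical ``non-preferred'' value, so that any item pricing faces the following dilemma: pricing low enough to sell to all types forces a loss, while pricing high enough to match only the preferred value restricts demand and yields a per-purchase margin of only $O(1/m)$.

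For the upper bound on $\sprofit_\cs$, I would analyze any item pricing $p$ via a case split on $\min_j p_j$. In each case, either the number of types purchasing is small or the per-purchase margin is $O(1/m)$, giving total expected profit $O(1)$. For the lower bound on $\buymanyprofit_\cs$, I would construct a menu of $m$ lotteries $\{(\lambda^{(k)}, q_k)\}_{k \in [m]}$ in which $\lambda^{(k)}$ is tailored so that type $k$ uniquely prefers it and pays $q_k$ strictly above the expected production cost $\cs \cdot \lambda^{(k)}$ by an additive $\Omega(1)$. Summing over the $m$ types then gives expected profit $\Omega(m)$, establishing the desired ratio.

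The main obstacle is ensuring the constructed menu satisfies the buy-many constraint $q(\lambda) \leq \hat{p} \cdot \lambda$, where $\hat{p}_j$ is the effective per-item price induced by the menu. The allocation probabilities $\lambda^{(k)}_j$ must be spread out enough that $\hat{p}_j$ is large for every $j$ -- otherwise the buyer could cheaply reconstruct any single item by repeatedly purchasing one of the lotteries -- yet concentrated enough on the preferred item $k$ that each type $k$'s utility from $\lambda^{(k)}$ strictly exceeds her utility from any other menu option. Balancing these two requirements, so that incentive compatibility, individual rationality, and the buy-many subadditivity hold simultaneously while preserving the $\Omega(1)$ per-type profit margin, is the core technical content of the construction.
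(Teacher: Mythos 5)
Your high-level intuition---that buy-many menus can price-discriminate via tailored lotteries while item pricing is stuck with a single margin per item---is on the right track, but the specific construction you outline has two genuine gaps.

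\textbf{The arithmetic does not give $\Omega(m)$.} You propose a uniform mixture over $m$ types, each appearing with probability $1/m$, and a menu where type $k$ pays $\Omega(1)$ above production cost. The expected buy-many profit is then $\sum_k \frac{1}{m}\cdot\Omega(1) = \Omega(1)$, not $\Omega(m)$. To get $\Omega(m)$ with uniform weights, each type would need to contribute $\Omega(m)$ \emph{conditional} profit. But for a unit-demand buyer, type $k$'s willingness to pay for any lottery $\lambda$ is at most $v^{(k)}_k \lambda_k$ plus small terms, while the production cost of $\lambda$ is at least $c_k\lambda_k$, so the per-type profit is bounded by roughly $v^{(k)}_k - c_k$. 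Item pricing can also capture this margin (set $p_k = v^{(k)}_k$), so the symmetric structure gives no separation: you would be bounding both mechanisms by the same quantity.

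\textbf{The separation requires an asymmetric, equal-revenue-style construction.} The paper's example uses cost vector $\cs=(0,m,\ldots,m)$, a geometric distribution where type $v^{(i)}$ appears with probability $2^{-i}$, has $v^{(i)}_1 = 2^i$ (an equal-revenue distribution on the zero-cost item $1$), $v^{(i)}_i = m$ (exactly at cost), and zero elsewhere. Item pricing is killed because item $1$ follows an equal-revenue distribution (only $O(1)$ revenue extractable with a single price) and items $j\ge 2$ can never be sold above cost. Buy-many wins because the lottery $\lambda^{(i)}_1=\lambda^{(i)}_i=\tfrac12$ at price $2^{i-1}+m/2$ lets the seller implicitly sell the zero-cost item $1$ at a type-specific price, earning conditional profit $2^{i-1}$ from type $v^{(i)}$; the geometric decay $2^{-i}$ exactly cancels this so each type contributes a constant $\tfrac12$ to expected profit, summing to $\Omega(m)$. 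The buy-many constraint is satisfied by fiat: price every other lottery by the cheapest adaptive strategy over the $\lambda^{(i)}$'s, and then check that each type $v^{(i)}$ still prefers to buy $\lambda^{(i)}$ once, which follows because item $i$ is of interest only to type $v^{(i)}$. Without a zero-cost item carrying a heavy-tailed distribution, the tension you flag between subadditivity and per-type margin is not resolvable, and a uniform symmetric construction will not produce the $\Omega(m)$ gap.
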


We instead provide a bi-criteria approximation for the Lagrangian problem. In particular, we compare the profit of the optimal item pricing for cost vector $\cs$ with the profit of the optimal buy many mechanism with production costs $2\cs$. This suffices to imply Theorem~\ref{thm:constrained-alloc-approx} with a slight worsening in the approximation factor.

\begin{theorem}\label{buy-many single buyer with costs}
For any single buyer with value distribution $\dist$ over $m$-items and production costs vector $\cs$,
    \begin{equation*}
    \buymanyprofit_{2\cs}(\dist)\leq 2\ln4m\,\sprofit_{\cs}(\dist).
    \end{equation*}
\end{theorem}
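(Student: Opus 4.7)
The approach is to exhibit a random item pricing (with costs $\cs$) whose expected profit is a $1/(2\ln 4m)$ fraction of $\buymanyprofit_{2\cs}(\dist)$. Fix the optimal buy-many mechanism $p^*$ for costs $2\cs$, and let $\hat{p}^*_j$ denote the per-item cost induced by the buy-many constraint, so that $p^*(\lambda) \le \hat{p}^* \cdot \lambda$ for every lottery $\lambda$. For each type $v$, if $\lambda_v$ is the buyer's chosen allocation under $p^*$, her contribution to the buy-many profit satisfies
\[
p^*(\lambda_v) - 2\cs\cdot\lambda_v \;\le\; (\hat{p}^*-2\cs)\cdot \lambda_v \;\le\; \sum_{j} \lambda_{v,j}\,(\hat{p}^*_j - 2c_j)^+,
\]
since items with $\hat{p}^*_j < 2c_j$ can only shrink the profit.

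\textbf{Construction.} Consider a family $\{q^{(k)}\}_{k=0}^{K}$ of item pricings, with $K=\Theta(\ln m)$, defined by $q^{(k)}_j = 2c_j + 2^{-k}(\hat{p}^*_j - 2c_j)^+$. Because $q^{(k)}_j \ge 2c_j \ge c_j$, any sale of item $j$ at scale $k$ yields profit at least $c_j + 2^{-k}(\hat{p}^*_j - 2c_j)^+$. The random item pricing simply picks $k$ uniformly in $\{0,\dots,K\}$. By averaging over $v\sim\dist$, it suffices to show the per-type bound
\[
\sum_{k=0}^{K} \bigl(q^{(k)}_{j^*(k)} - c_{j^*(k)}\bigr) \mathbf{1}\bigl[\text{$v$ purchases at scale $k$}\bigr] \;\ge\; \Omega\!\left(\sum_j \lambda_{v,j}(\hat{p}^*_j - 2c_j)^+\right),
\]
where $j^*(k) = \argmax_j (v_j - q^{(k)}_j)$ is the buyer's choice at scale $k$.

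\textbf{Per-type lemma.} The plan to prove this bound is to adapt the scale-sweeping argument used by \citet{chawla2019buy} for the $\buymanyrev$-vs-$\srev$ comparison. The buy-many constraint, together with the buyer's IR, yields $v\cdot \lambda_v - p^*(\lambda_v) \ge v_{j^*(k)} - q^{(k)}_{j^*(k)}$ at every scale $k$, and similarly $v\cdot\lambda_v - p^*(\lambda_v) \ge v_j - \hat{p}^*_j$ for every deterministic per-item purchase. These inequalities tie the buyer's choice at each scale to the lottery $\lambda_v$, and the buyer's item choice $j^*(k)$ shifts monotonically as $k$ grows (prices shrink geometrically). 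A telescoping/ladder argument on the geometric sequence of scales then transfers the sum $\sum_j \lambda_{v,j}(\hat p^*_j - 2c_j)^+$ onto the sum of per-scale profits, up to a constant factor.

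\textbf{Main obstacle.} The crux of the proof is the per-type lemma, and the main subtlety is handling the cost floor. The naive extension of \citet{chawla2019buy}'s argument would use $q^{(k)}_j = 2^{-k}\hat{p}^*_j$, but at small scales this falls below $c_j$, producing unprofitable sales; indeed Theorem~\ref{thm:cost-gap-example} shows that \emph{no} polylogarithmic bound can hold without such slack. The factor-$2$ inflation on the buy-many cost supplies exactly the slack we need: subtracting $2c_j$ before scaling concentrates all price variation on the markup $(\hat{p}^*_j - 2c_j)^+$, while reintroducing $2c_j$ as a floor keeps every offered price above cost. The delicate part is verifying that the price floor does not distort the buyer's choice pattern enough to break the ladder argument, and that the resulting constant is $2\ln(4m)$ rather than something depending on finer properties of $\dist$.
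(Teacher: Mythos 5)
Your high-level plan—sweep a geometric family of item pricings with a cost floor, and show that the expected profit at a random scale captures an $\Omega(1/\log m)$ fraction of the buy-many profit—is the same strategy the paper uses (the paper implements it with a continuous $\alpha$-parametrization and the envelope theorem, but a discrete geometric sweep is a standard equivalent). The bi-criteria insight (charge the buy-many mechanism cost $2\cs$ to create slack against the cost floor) is also exactly right. However, your argument leaves the crux unproved, and the specific markup you chose makes the per-type lemma break down.

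The problem is the definition $q^{(k)}_j = 2c_j + 2^{-k}(\hat{p}^*_j - 2c_j)^+$. The quantity $\hat p^*_j - 2c_j$ subtracts only the cost of item $j$ from the per-item lottery price, but the lottery $\lambda$ that achieves $\hat{p}^*_j=\min_\lambda p^*(\lambda)/\lambda_j$ may put substantial weight on \emph{other} items with large costs; those costs are not subtracted, so $(\hat{p}^*_j - 2c_j)^+$ is not controlled by the buyer's profit $p^*(\lambda)-2\cs\cdot\lambda$. Concretely, the ladder argument gives
\[
\sum_{k} \bigl(q^{(k)}_{j^*(k)} - c_{j^*(k)}\bigr)\;\ge\; U_K - U_0
\;\ge\; \bigl(v\cdot\lambda_v - 2\cs\cdot\lambda_v - 2^{-K}\textstyle\sum_j \lambda_{v,j}(\hat p^*_j-2c_j)^+\bigr) - u_v(p^*),
\]
and the correction term $2^{-K}\sum_j \lambda_{v,j}(\hat p^*_j-2c_j)^+$ cannot be absorbed. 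The analogue of the paper's key inequality $(q-\cs)\cdot\lambda \le m\,(p(\lambda)-\cs\cdot\lambda)$ fails for your markup: one only gets $\sum_j (\hat p^*_j-2c_j)^+\lambda_j \le m\,p^*(\lambda)-2\cs\cdot\lambda$, which is weaker by roughly $(m-1)\cdot 2\cs\cdot\lambda$, so the lower-scale utility is \emph{not} at least $u_v(p^*) + \Omega(\profit_{p^*,2\cs}(v))$. Relatedly, the per-type target you set, $\Omega\bigl(\sum_j\lambda_{v,j}(\hat p^*_j - 2c_j)^+\bigr)$, is strictly stronger than $\Omega(\profit_{p^*,2\cs}(v))$ and is not what you can hope to prove (the former can exceed the latter by a factor of $m$).

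The fix is to define the markup as the paper does: $q_j - c_j := \min_\lambda \frac{p^*(\lambda) - \cs\cdot\lambda}{\lambda_j}$ (and then $q^{(k)}_j = c_j + 2^{-k}(q_j-c_j)$, say, sweeping $2^{-k}$ from roughly $2$ down to $1/(2m)$). This quantity subtracts the cost of the \emph{entire} lottery, which is what makes the key inequality $(q_j-c_j)\lambda_j \le p^*(\lambda)-\cs\cdot\lambda$ hold for all $\lambda$, and hence $\sum_j(q_j-c_j)\lambda_j\le m\bigl(p^*(\lambda)-\cs\cdot\lambda\bigr)$. It also automatically satisfies $q_j\ge 2c_j$ (using the WLOG $p^*(\lambda)\ge 2\cs\cdot\lambda$), so the cost floor is built in. With this definition the telescoping/envelope argument goes through and yields the $2\ln(4m)$ bound. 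As it stands, your write-up correctly identifies what needs to be shown and why the cost floor is essential, but it gets the markup wrong and does not prove the lemma that it itself labels as the crux of the proof.
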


The rest of the section is organized as follows. We first show a complete proof of Theorem~\ref{thm:constrained-alloc-approx} based on Theorem~\ref{buy-many single buyer with costs}. We then describe and verify the gap example in Theorem~\ref{thm:cost-gap-example}. The proof of Theorem~\ref{buy-many single buyer with costs} is  deferred to the appendix. Each of these components is self-contained.

\subsection{Proof of Theorem~\ref{thm:constrained-alloc-approx}}
\begin{proof}[Proof of Theorem~\ref{thm:constrained-alloc-approx}]

We first note that $\srev(x)$ is a concave function over $x$ because it optimizes for revenue over random pricings. 
Fix an ex-ante constraint $x^o$ and consider the function $g(x):=\srev(x)-x\cdot\nabla\srev(x^o)$. This function is maximized at $x=x^o$. Furthermore, since $\srev(x)$ is monotone non-decreasing, $\nabla\srev(x^o)\geq\zeros$, and so $c := \nabla\srev(x^o)$ can be thought of as a vector of production costs. 
Since $g(x)$ is exactly the profit of an item pricing for buyer distribution $\dist$ with allocation $x$ and production cost vector $c$ for all items, we know that the random item pricing $p$ that achieves $\srev(x^o)$ is also the optimal item pricing for a \remove{unit-demand} buyer with value distribution $\dist$ and item production costs $\cs$, without any allocation constraint.
\footnote{For any buyer $i$, it is possible that $p$ has allocation $y\leq x^o$, with $\srev(y)=\srev(x^o)$. However, for any item $j$ such that $y_{j}<x^o_j$, since $\srev(y)=\srev(x^o)$, the gradient $\nabla\srev(x^o)$ has value $c_j=0$ on the $j$th component. Thus the profit of item pricing $p$ for the buyer with production costs $\cs$ is still $\srev(x^o)-\cs\cdot x^o$, although the actual allocation is less than $x^o$.} 

Now consider the buy-many profit optimization problem with production costs $2c$. The optimal profit is given by $\buymanyprofit_{2c}$. Restricting attention to buy many mechanisms that satisfy the ex-ante constraint $x^o$, we let $\buymanyprofit_{2c}(x^o)$ denote the optimal profit obtained over that set of mechanisms. Then, we can apply Theorem~\ref{buy-many single buyer with costs} to obtain:
\begin{eqnarray*}
    \srev(x^o)&=&\sprofit_{\cs}(x^o)+\cs\cdot x^o\\
    &=&\sprofit_{\cs}(\dist)+\cs\cdot x^o\\
    &\geq&\frac{1}{2\ln 4m}\buymanyprofit_{2c}(\dist)+\cs\cdot x^o\\
    &\geq&\frac{1}{2\ln 4m}\buymanyprofit_{2c}(x^o)+\cs\cdot x^o\\
    &=&\frac{1}{2\ln 4m}(\buymanyprofit_{2c}(x^o)+2\cs\cdot x^o)+\cs\cdot x^o\left(1-\frac{1}{\ln 4m}\right)\\
    &\geq& \frac{1}{2\ln 4m}\buymanyrev(x^o),
\end{eqnarray*}
Here the first line is true by extracting the terms of item costs; the second line is true since $x=x^o$ is optimal for profit under item costs $\cs$; the third line is true by Theorem~\ref{buy-many single buyer with costs}; the fourth line is true since adding an allocation restriction cannot increase profit; the last line is true since $\buymanyprofit_{2c}(x^o)+2\cs\cdot x^o\geq\buymanyrev(x^o)$. This finishes the proof of the theorem.
\end{proof}

\subsection{Proof of Theorem~\ref{thm:cost-gap-example}}

\begin{proof}[Proof of Theorem~\ref{thm:cost-gap-example}]
Let $\cs=(0,2^m,2^m,\cdots,2^m)$. 
For every $j$ such that $2\leq j\leq m$, let $v^{(j)}$ be the following unit-demand value function: $v^{(j)}_1= 2^j$; $v^{(j)}_j=2^m$; $v^{(j)}_k=0$ for $k\not\in\{1,j\}$. In other words, the buyer with value $v^{(j)}$ is only interested in two items $1$ and $j$, with the value for the first item being $2^j$, and that for item $j$ being $2^m$. Consider the following value distribution $\dist$: for every $j$ such that $2\leq j\leq m$, with probability $2^{-j}$, $v=v^{(j)}$; for the remaining probability, $v=\zeros$. Now we analyze $\sprofit_\cs$ and $\buymanyprofit_\cs$. 

For any item pricing $p$, consider its profit contribution from the first item and the rest of the items. Since the buyer's value for the first item forms a geometric distribution, the profit contribution from the first item is upper bounded by the revenue of selling only the first item, which is $O(1)$. For the rest of the items, since when the buyer purchases some item $j>1$, the item must have a price at most $2^m=\cs_j$, this means that the profit contribution of item $j$ is at most 0. Thus $\sprofit_\cs=O(1)$.

Consider the following buy-many mechanism: for every $j\geq 2$, there is a menu entry with allocation $\lambda^{(j)}$ and price $p^{(j)}=2^{j-1}+2^{m-1}$, where $\lambda^{(j)}_1=\lambda^{(j)}_j=0.5$. For any $\lambda\in\udelta$ that is not some $\lambda^{(j)}$, its price is determined by the cheapest way to adaptively purchase it with $\lambda^{(2)}, \cdots, \lambda^{(m)}$. For every buyer of type $v^{(j)}$, she buys lottery $(\lambda^{(j)},p^{(j)})$ in the mechanism with utility 0.\footnote{We can reduce the price of $\lambda^{(j)}$ by some small $\epsilon>0$ to make each buyer type's utility be strictly positive.} Since item $j$ is only of interest to buyer $v^{(j)}$, the buyer would not purchase any other set of lotteries in the mechanism. Since profit from buyer $v^{(j)}$ is $2^{j-1}$, and she gets realized in $\dist$ with probability $2^{-j}$, the expected profit of the buy-many mechanism is $\Omega(m)$. 
\end{proof}

\section{Approximation via Sequential Item Pricing}
\label{sec:approx}

We will now focus on item pricings and prove Theorem~\ref{thm:seq-pricing-approx-exante}. In particular, we show that non-adaptive sequential item pricings can obtain half of the ex-ante optimal item pricing revenue, regardless of the order in which buyers are served. 

\begin{numberedtheorem}{\ref{thm:seq-pricing-approx-exante}} {\em (Restatement)}
For any joint distribution $\dists=(\dist_1,\dist_2,\cdots,\dist_n)$ over $n$ unit-demand or additive buyers and any order $\sigma$ on arrival of buyers, there exists a deterministic sequential item pricing $q$ with buyers arriving in order $\sigma$, such that
   \begin{equation*}
       \rev_{\dist}(q)\geq \frac{1}{2}  \exantesrev(\dists).
    \end{equation*}
\end{numberedtheorem}

For additive buyers, the items impose no externalities on each other, and so the theorem follows immediately from the single-unit prophet inequality, Henceforth we focus on unit-demand buyers. Our argument is based loosely around online contention resolution schemes (OCRS) \cite{feldman2016online} and prophet inequality arguments. The idea is to start with the optimal solution to the ex ante item pricing revenue: $x^* := \argmax_{x: \sum_i x_{ij}\le 1 \forall j\in [m]} \sum_i \srev(\dist_i, x_i)$. Then, given some ordering $\sigma$ over the buyers, we try to mimic this allocation by choosing pricings for each buyer that ensure that the buyer receives allocation comparable to $x^*_i$. As in OCRS, we tradeoff assigning enough allocation to a buyer with maintaining a good probability that items remain available for future buyers. 

A key difference in our setting relative to work on OCRS is that the latter mostly focuses on utilitarian objectives, e.g. social welfare, so that the tradeoff is easily quantified: choosing an alternative with half the probability of the ex ante optimum, for example, provides half its contribution to the objective. \citet{chawla2007algorithmic} show how to apply this approach to revenue for unit demand buyers with values independent across items by transforming values to Myersonian virtual values. Unfortunately this approach does not extend to values correlated across items because in correlated settings it is not possible to assign virtual values to each individual value independent of other values.

We develop an alternate argument. For any single unit-demand buyer, we consider how the item pricing revenue changes as the allocation of the buyer is decreased from some intended allocation $x^*$ to a new allocation $y$ that is component-wise smaller. We show that if $x^*$ is realized by item pricing $p$, then we can realize allocation $y$ while obtaining revenue at least $y\cdot p$. In particular, uniformly scaling down allocations by some factor scales down revenue by no more than the same factor. This allows us to carry out the OCRS-style argument. We formalize the above claim as a lemma before providing a proof of Theorem~\ref{thm:seq-pricing-approx-exante}. 

In the following discussion, for any unit-demand buyer with value distribution $\dist$, let $\alloc_{p,S}(\dist)$ be the allocation vector of item pricing $p$ over the set of available items $S\subseteq [m]$. We remove the distribution $\dist$ whenever it is clear from the context. When $S=[m]$, we use $\alloc_p$ instead of $\alloc_{p,[m]}$.
\begin{lemma}\label{lemma:item-pricing-with-less-alloc-exists}

For any unit-demand buyer, any deterministic item pricing $p$, and any distribution over set $S$ of available items, let $x^*=\E_{S}[\alloc_{p,S}]$ be the expected allocations of $p$ conditioned on the available set of items being $S$. Then for any allocation vector $y\in\udelta$ such that $y\preceq x^*$,
there exists a random item pricing $q$ such that 
\begin{equation*}
    \E_{q, S}[\alloc_{q}(S)]= y,\textrm{ and } \rev_{q,S} = y\cdot p.
\end{equation*}

\end{lemma}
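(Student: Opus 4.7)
The plan is to realize $q$ as a mixture over a restricted family of deterministic pricings. For each subset $T\subseteq [m]$, define $p^T$ to be the deterministic item pricing that charges $p_j$ for items $j\in T$ and a prohibitive price (formally $+\infty$) for items $j\notin T$, and set $\alloc(T):=\E_S[\alloc_{p^T,S}]=\E_S[\alloc_{p,S\cap T}]$; note $\alloc([m])=x^*$ and $\alloc(\emptyset)=\zeros$. When the buyer purchases any item $j$ under $p^T$, the price paid is exactly $p_j$, so a random pricing $q$ obtained by mixing the pricings $\{p^T\}$ with weights $\mu(T)\geq 0$ summing to one satisfies $\E_{q,S}[\alloc_q(S)]=\sum_T \mu(T)\alloc(T)$ and $\rev_{q,S}=\sum_T\mu(T)(\alloc(T)\cdot p)=\bigl(\sum_T\mu(T)\alloc(T)\bigr)\cdot p$. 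Once the allocation matches $y$, the revenue identity $\rev_{q,S}=y\cdot p$ follows automatically, reducing the task to exhibiting a probability distribution $\mu$ over $2^{[m]}$ with $\sum_T\mu(T)\alloc(T)=y$.

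The unit-demand structure enters through a single monotonicity observation: for any $j\in T$ and any realization of $(v,S)$, if $j$ is the utility-maximizer of $v_{j'}-p_{j'}$ over $S$ then $j$ remains the utility-maximizer over $S\cap T$, since restriction to $T$ only removes competitors. Taking expectations, $\alloc(T)_j\geq x^*_j$ for $j\in T$, and $\alloc(T)_j=0$ for $j\notin T$. The plan is to invoke Farkas' lemma to establish feasibility of $\{\mu\geq 0,\;\sum_T \mu(T)=1,\;\sum_T\mu(T)\alloc(T)=y\}$. Suppose for contradiction that there exist $\lambda\in\R^m$ and $\lambda_0\in\R$ with $\lambda\cdot\alloc(T)+\lambda_0\geq 0$ for every $T\subseteq [m]$ but $\lambda\cdot y+\lambda_0<0$. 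Apply the inequality at $T^*:=\{j:\lambda_j<0\}$: since $\alloc(T^*)_j=0$ for $j\notin T^*$ and $\lambda_j\alloc(T^*)_j\leq \lambda_j x^*_j$ for $j\in T^*$ (using $\lambda_j<0$ together with $\alloc(T^*)_j\geq x^*_j$), one gets $\lambda\cdot\alloc(T^*)\leq \sum_{j\in T^*}\lambda_j x^*_j$. Meanwhile, $y\preceq x^*$ combined with $\lambda_j<0$ on $T^*$ and $\lambda_j y_j\geq 0$ on $[m]\setminus T^*$ yields $\lambda\cdot y\geq \sum_{j\in T^*}\lambda_j x^*_j$. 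Chaining, $\lambda\cdot y+\lambda_0\geq \sum_{j\in T^*}\lambda_j x^*_j+\lambda_0\geq \lambda\cdot\alloc(T^*)+\lambda_0\geq 0$, contradicting the strict inequality.

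The main obstacle lies in selecting the right family of base pricings. The natural constructive attempt, independently deactivating each item with probability $1-y_j/x^*_j$, fails because whenever a buyer's preferred item is deactivated she falls back to her next-best available item, inflating its expected allocation above $y_j$. Restricting attention to the deterministic pricings $p^T$ and mixing over them with carefully chosen weights avoids this pitfall, and the unit-demand monotonicity $\alloc(T)_j\geq x^*_j$ for $j\in T$ is precisely what supplies the dual certificate needed to make Farkas go through.
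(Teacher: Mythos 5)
Your proof is correct and follows essentially the same route as the paper: both realize $q$ as a mixture of the restricted pricings $p^T$ (each charging $p_j$ on $T$ and $+\infty$ off $T$), use the unit-demand monotonicity $\alloc(T)_j\geq x^*_j$ for $j\in T$ together with $\alloc(T)_j=0$ for $j\notin T$, and derive the revenue identity from the fact that any sale under $p^T$ collects exactly $p_j$. The one genuine improvement is that you supply a Farkas-lemma certificate for the key feasibility step --- that every $y\preceq x^*$ lies in the convex hull of $\{\alloc(T)\}_{T\subseteq[m]}$ --- whereas the paper asserts this containment with a bare ``thus'' and does not justify it.
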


\begin{proof}
We first prove the theorem when both $S$ and $p$ are deterministic. Then, we extend the proof to the case where the available set $S$ can be possibly randomized.

For $p=(p_1,\ldots, p_m)\in\R_+^m$ being a deterministic item pricing, assume that the set of available items is fixed to be some deterministic set $S$. For any set $T\subseteq S$, define item pricing $p_T$ to be the pricing $p$ restricted to items in $T$. In other words, $p_{T,j}=p_j$ for all $j\in T$, and $p_{T,j}=\infty$ otherwise. For ease of notation, let $x_T=\alloc_{p,T}$ be the allocation under available set $T$. Observe that for any $j\not\in T$, $x^*_j=0$ and for any $j\in T$, $x_T(j)\geq x^*_j$: the latter is true since under the same pricing, when fewer items are available, buyer types that purchase an item not in $T$ may switch to purchase some item in $T$, while the other buyer types' incentives remain unchanged. Thus for $y\preceq x^*$, $y$ is in the convex hull of the set of $2^{|S|}$ points $X=\{x_T|T\subseteq S\}$. Write $y=\sum_{T\subseteq S}\alpha_T x_{T}$ as a convex combination of vectors in $X$, here $\alpha_T\in[0,1]$ for every $T$, and $\sum_T\alpha_T=1$. Consider the following randomized item pricing $q$: with probability $\alpha_T$, $q=p_T$, $\forall T\subseteq S$. Then the expected allocation of $q$ is exactly $y$. On the other hand, the expected revenue is $\rev_{q,S}=p\cdot y$ because whenever item $j$ is sold in $q$, it is sold at a price of $p_j$.
    
Now let $S$ be a random variable over sets of available items. By defining $x_T=\E_{S}\alloc_{p,T\cap S}$ to be the expected allocation of item pricing $p$ under available item set $T\cap S$, the above proof still goes through. This finishes the proof of the lemma. 
\end{proof}

With the help of the lemma above, we are now ready to prove Theorem~\ref{thm:seq-pricing-approx-exante}.

\begin{proof}[Proof of Theorem~\ref{thm:seq-pricing-approx-exante}]
For every buyer $i$, let $p_i$ be the (randomized) item pricing defining $\exantesrev(\dist)$, and $x_i$ be the corresponding allocation vector. We may assume $\sigma_i=i$ without loss of generality. We build the desired deterministic sequential item pricing $q$ incrementally for every buyer. At arrival of buyer $i$, we use the allocation vector $x_{i}$, the distribution over currently available items (where the randomness comes from previous buyers and item pricings) and \Cref{lemma:item-pricing-with-less-alloc-exists} to produce a (randomized) item pricing vector $q_{i}$ with the property that for $S_i$ being the random set of available items to buyer $i$, 
\begin{equation}\label{eqn:exante}
\E_{q_i, S_i}[x_{q_{i}, S_i}(\dist_{i})] = \frac{x_{i}}{2}, \, \text{and, }\,
    \E_{q_i, S_i}[\rev_{q_{i}}(\dist_{i},S_i)] = \frac{x_{i}}{2}\cdot p_{i}=\frac{1}{2}\exantesrev(\dist_{i}).
\end{equation} 
Here $\rev_{q_i}(\dist_i,S_i)$ denotes the revenue of item pricing $q_i$ for buyer $i$ conditioned on the available item set being $S_i$ when $i$ arrives. If such $q_i$ exists for every $i$, then the random sequential item pricing $q$ satisfies $\rev_{\dists}(q)\geq \frac{1}{2}  \exantesrev(\dists)$. Thus there must exist a realization of $q$ being a deterministic sequential item pricing satisfying the requirement of the theorem.

Now it suffices to show that item pricing $q_i$ exists for \eqref{eqn:exante}, and the rest of the proof is dedicated to proving this. An important observation is that if by induction the item prices $q_{i'}$ satisfying \eqref{eqn:exante} exist for every $i'<i$, then every item belongs to $S_i$ with probability at least $\frac{1}{2}$. Such a observation is true by noticing that by union bound, the allocation of item $j$ in the first $i-1$ steps is at most $\sum_{i'<i}\frac{x_{i'j}}{2}\leq \sum_{i'\leq m}\frac{x_{i'j}}{2}\leq \frac{1}{2}$ since item $j$ has a total ex-ante allocation at most 1.

When $p_i$ is deterministic, since every element in $[m]$ exists with probability at least $\frac{1}{2}$ in $S_i$, we know that for any realized buyer type, her favorite item still remains with probability at least $\frac{1}{2}$ and she would not deviate to purchase something else. Thus $\E_{p_i,S_i}[\alloc_{p_i,S_i}]\succeq \frac{1}{2}x_i$. By \Cref{lemma:item-pricing-with-less-alloc-exists}, there exists a random item pricing $q$, such that $\E_{q,S_i}[\alloc_{q,S_i}]=\frac{1}{2}x_i$, and $\E_{q,S_i}[\rev_{q,S_i}]=\frac{1}{2}x_i\cdot p_i=\frac{1}{2}\exantesrev(\dist_i)$. Thus \eqref{eqn:exante} is satisfied.

When $p_i$ is random, consider any instantiation of $p_i$. The same as the reasoning in the previous paragraph, $\E_{p_i,S_i}[\alloc_{p_i,S_i}]\succeq \frac{1}{2}\alloc_{p_i}$ still holds, and there exists a random item pricing $q_{p_i}$ such that $\E_{q_{p_i},S_i}[\alloc_{q_{p_i},S_i}]=\frac{1}{2}\alloc_{p_i}$, and $\E_{q_{p_i},S_i}[\rev_{q_{p_i},S_i}]=\frac{1}{2}\alloc_{p_i}\cdot {p_i}$. Consider the following random item pricing $q_i$: firstly generate a realization of random item pricing $p_i$, then generate a realization of random item pricing $q_{p_i}$ defined above. The expected allocation of $q_i$ is 
\begin{equation*}
    \E_{p_i}\E_{q_{p_i},S}[\alloc_{q_{p_i},S}]=\frac{1}{2}x_i,
\end{equation*}
while the expected revenue is
\begin{equation*}
    \E_{p_i}\E_{q_{p_i},S}[\rev_{q_{p_i},S}]=\E_{p_i}\left[\frac{1}{2}\alloc_{p_i}\cdot p_i\right]=\frac{1}{2}\exantesrev(\dist_i).
\end{equation*}
Thus $q_i$ satisfies \eqref{eqn:exante}, which finishes the proof of the theorem.

\end{proof}



\section{Discussion}\label{discussion}
In this section, we mention some examples that motivate definition and further work in multi-buyer buy-many mechanisms. Moreover, we discuss why extension of our results to any valuation function is challenging.

\textbf{Motivating Examples. }
Consider a seller who is selling multiple items in multiple markets, and faces a common supply constraint across these markets. The seller is free to choose a different selling mechanism in each market, but interacts with a buyer from each market in just one go. For example, imagine Amazon selling rare books in multiple markets (US, Europe, India). Within each market, Amazon will display a price schedule that does not change frequently based on purchase decisions in other markets (but updates availability). This price schedule could be different for different markets based on local preferences and demand. Within each market individually, given that prices will remain static over short periods of time, a buy-many constraint is a natural property to satisfy. This scenario fits directly within our model.

For another similar scenario consider a travel website like hotwire.com which offers deals on airline tickets, hotel rooms, etc., without revealing complete vendor information. In effect, it sells lotteries. This is another example with supply constraints where the mechanism may personalize prices for each potential buyer (e.g. based on which browser the buyer is using). If the seller uses a non-buy-many mechanism (e.g. if lotteries on multiple items are generally more expensive than individual prices on the items they contain) it would lose customers over time.

\textbf{Difficulties of Extending the Results to All Valuations. }One component of our argument, namely approximating the ex-ante buy many revenue by the ex-ante SRev holds for every possible value function. However, we don’t know how to extend the second part of the argument – approximating the ex-ante SRev using sequential item pricing – for value functions that are not unit-demand or additive. This requires constructing a multi-dimensional prophet inequality. The key technical challenge for non-unit-demand valuations is in keeping track of and controlling how the probability that a particular subset of items is available to an agent depends on decisions of other buyers.
\bibliographystyle{plainnat}
\bibliography{references}

\appendix
\section{Proof of Theorem~\ref{buy-many single buyer with costs}}

\begin{numberedtheorem}{\ref{buy-many single buyer with costs}}
For any single buyer
with value distribution $\dist$ over $m$-items and production costs vector $\cs$,
    \begin{equation*}
    \buymanyprofit_{2\cs}(\dist)\leq 2\ln4m\,\sprofit_{\cs}(\dist).
    \end{equation*}
\end{numberedtheorem}

\begin{proof}[Proof of Theorem~\ref{buy-many single buyer with costs}]
    The proof is similar to the proof of Theorem 1.3 in \cite{chawla2019buy} without production costs. When constructing the item pricing, we need to take the item costs into consideration. 
    
    Let $p$ be the optimal buy-many mechanism under production cost $2\cs$ \footnote{Note that optimal buy-many mechanism is defined over all possible random allocations.}. Observe that for any lottery $\lambda\in\udelta$, without loss of generality we can assume that $p(\lambda)\geq 2\cs\cdot\lambda$: otherwise, we can just remove all menu options with price smaller than production cost, and replace them with the cheapest way to generate the lottery with the remaining options. This way, each lottery $\lambda\in\udelta$ has price at least $2\cs\cdot\lambda$. Furthermore, each buyer type that previously purchase some lottery with production cost larger than the price would deviate to purchase some menu option with price being at least the production cost, while other buyer types' incentives remain unchanged, which means the total profit (i.e. payment minus item costs) does not decrease.

    
    Consider the following item pricing $q$ from $p$: For each item $j$, let
    \begin{equation}\label{eqn:def-of-q}
        q_j=\min_{\lambda\in\udelta}\frac{p(\lambda)-\cs\cdot\lambda}{\lambda_j}+c_j.
    \end{equation}
    In other words, item pricing $q$ is defined by setting the price of each item $j$ to be the cheapest way to purchase the item, while taking into consideration the production cost. We further define item pricing $q_\alpha$ parameterized by $\alpha\in\R$ as a partial scaling of the pricing $q$, where the first term is scaled by the factor $1-\alpha$ but the cost component is left unchanged: 
    \begin{equation*}
        q_{\alpha,j}=\alpha c_j+(1-\alpha)q_j = (1-\alpha) \min_{\lambda\in\udelta}\frac{p(\lambda)-\cs\cdot\lambda}{\lambda_j}+c_j.
    \end{equation*}
    The price of any lottery $\lambda$ under an item pricing $q$ is defined as $q(\lambda)= q\cdot \lambda$. Under item pricing $q_\alpha$, for any buyer with value function $v$, her utility $u(v,\alpha)$ is
    \begin{equation*}
        u(v,\alpha)=\max_{\lambda\in\udelta}\left(v(\lambda)-\alpha \cs\cdot\lambda-(1-\alpha)q(\lambda)\right).
    \end{equation*}
    Let $\lambda_\alpha$ be the lottery purchased by buyer $v$ under pricing $q_\alpha$. By the envelope theorem, we have
    \begin{equation*}
        \frac{d}{d\alpha}u(v,\alpha)=(q-c)\cdot \lambda_\alpha.
    \end{equation*}
    Thus the profit $\profit_{q_\alpha,\cs}(v)$ of buyer $v$ under pricing $q_\alpha$ and production cost $\cs$ is 
    \begin{equation}\label{eqn:rev-util}
        \profit_{q_\alpha,\cs}(v)=q_\alpha(\lambda_\alpha)-c\cdot\lambda_{\alpha}=(1-\alpha)(q-c)\cdot\lambda_\alpha=(1-\alpha)\frac{d}{d\alpha}u(v,\alpha).
    \end{equation}
    Next, we consider two special cases: $\alpha=-1$ and $\alpha=1-\frac{1}{2m}$, and we bound the corresponding utility of buyer $v$ under pricing $q_\alpha$. 

    
    When $\alpha=-1$, we have 
    \begin{eqnarray*}
    q_{-1,j}&=&-c_j+2q_j\\
    &=&c_j+2\cdot\min_{\lambda\in\udelta}\frac{p(\lambda)-\cs\cdot\lambda}{\lambda_j}\\
    &\geq&c_j+2\cdot\min_{\lambda\in\udelta}\frac{p(\lambda)}{2\lambda_j}\\
    &\geq&\min_{\lambda\in\udelta}\frac{p(\lambda)}{\lambda_j}.
    \end{eqnarray*}
    Here the second line is by the definition \eqref{eqn:def-of-q} of $q_j$; the third line is by the earlier observation that $p(\lambda)\geq2\cs\cdot\lambda$ for any lottery $\lambda\in\udelta$. Notice that under buy-many pricing $p$, for any lottery $\lambda^o$, the following strategy gives allocation $\lambda^o\in\udelta$ with price $\sum_{j}\lambda^o_j\min_{\lambda\in\udelta}\frac{p(\lambda)}{\lambda_j}$: draw a set from the lottery $\lambda^o$, and for every item $j$ in this set, repeatedly purchase $\argmin_{\lambda\in\udelta}\frac{p(\lambda)}{\lambda_j}$ until getting the item $j$. Item $j$ is purchased with a total probability of $\lambda_j^o$. Since $p$ satisfies the buy-many constraint, we have for any $\lambda^o\in\udelta$,
    \begin{equation*}
        p(\lambda^o)\leq \sum_{j}\lambda^o_j\min_{\lambda\in\udelta}\frac{p(\lambda)}{\lambda_j}\leq \lambda^o\cdot q_{-1}
    \end{equation*}
    In other words, $q_{-1}$ is more expensive than $p$ on any lottery. Thus the utility of any buyer $v$ is smaller under $q_{-1}$:
    \begin{equation}\label{eqn:ub}
        u_v(p)\geq u(v, -1).
    \end{equation}
    Next, when $\alpha = 1-\frac{1}{2m}$, notice that by \eqref{eqn:def-of-q}, $(q_j-c_j)\lambda_j\leq p(\lambda)-c\cdot \lambda$ for any $\lambda\in\udelta$. Summing over all $j\in[m]$ we have
    \begin{equation}\label{eqn:q-c}
        (q-c)\cdot \lambda=\sum_{j=1}^{m}(q_j-c_j)\lambda_j\leq m(p(\lambda)-c\cdot\lambda).
    \end{equation}
    Let $\lambda^*\in\udelta$ be the lottery purchased by buyer $v$ under the optimal buy-many mechanism $p$ with cost $2\cs$. Then
    \begin{equation}{\label{eqn:lb}}
		\begin{split}
		u\left(v, 1-\frac{1}{2m}\right)&=\max_{\lambda\in\udelta} \left(v(\lambda)-q\cdot\lambda\cdot\frac{1}{2m}-\left(1-\frac{1}{2m}\right)c\cdot\lambda\right)\\
		&\geq \max_{\lambda\in\udelta} \left(v(\lambda)-c\cdot\lambda-\frac{p(\lambda)-c\cdot\lambda}{2}\right)\\
		&=\max_{\lambda\in\udelta}\left(v(\lambda)-p(\lambda)+\frac{p(\lambda)}{2}-\frac{c\cdot\lambda}{2}\right)\\
		&\geq v(\lambda^*)-p(\lambda^*)+\frac{p(\lambda^*)}{2}-\frac{c\cdot\lambda^*}{2}\\
		&\geq u_v(p)+\frac{1}{2}\profit_{p,2\cs}(v),\\
		\end{split}
	\end{equation}
	Here the second line is by \eqref{eqn:q-c}, and the last line is by the definition of the utility $u_v(p)$ of the buyer $v$ under mechanism $p$, and the definition of the profit $\profit_{p,2\cs}(v)$ of buyer $v$ in the setting with production costs. By adding \eqref{eqn:ub} and \eqref{eqn:lb}, we have 
	\begin{eqnarray}\label{eqn:util-diff}
	u\left(v,1-\frac{1}{2m}\right)-u(v,-1)\geq\frac{1}{2}\profit_{p,2c}(v).
	\end{eqnarray}
	Consider $\alpha$ to be randomly drawn from distribution over $[-1,1-\frac{1}{2m}]$ with density $\frac{1}{(1-\alpha)\ln 4m}$. The expected profit from buyer $v$ under pricing $q_\alpha$ and item costs $\cs$ is
	\begin{eqnarray*}
	\E_{\alpha}\profit_{q_\alpha, \cs}(v) &=& \E_{\alpha}(1-\alpha)\frac{d}{d\alpha}u(v,\alpha)\\
	&=&\int_{-1}^{1-\frac{1}{2m}}\frac{1}{(1-\alpha)\ln 4m}\cdot (1-\alpha)\frac{d}{d\alpha}u(v,\alpha) d\alpha\\
	&=&\frac{1}{\ln 4m}\left(u\left(v,1-\frac{1}{2m}\right)-u(v,-1)\right)\\
	&\geq&\frac{1}{2\ln 4m}\profit_{p,2\cs}(v).
	\end{eqnarray*}
	Here the first line is from \eqref{eqn:rev-util}; and the last line is from \eqref{eqn:util-diff}. Taking the expectation over all $v\sim \dist$, we know that there exists a pricing $q_\alpha$ such that $\profit_{q_\alpha,\cs}(\dist)\geq \frac{1}{2\ln 4m}\profit_{p,2\cs}(\dist)$. This finishes the proof of the theorem.
\end{proof}

\end{document}